\newtheorem{theorem}{Theorem}%[section]
\newtheorem{lemma}[theorem]{Lemma}
\newtheorem{definition}[theorem]{Definition}
\newtheorem{observation}[theorem]{Observation}
\newcommand{\E}{\ensuremath{\mathbf{E}}}
\newcommand{\sat}{\mathrm{sat}}
\newcommand{\ol}[1]{\bar{#1}}
\newcommand{\psuccess}{p_{\mathrm{success}}}
\newcommand{\pguessed}{p_{\mathrm{guessed}}}
\newcommand{\PPSZ}{\textsc{PPSZ}}
\newcommand{\SL}{\mathrm{SL}}
\newcommand{\VC}{V_\mathrm{F}}
\newcommand{\VN}{V_\mathrm{N}}
\newcommand{\nC}{n_\mathrm{F}}
\newcommand{\nN}{n_\mathrm{N}}
\newcommand{\boldF}{\mathbf{F}}
\newcommand{\boldG}{\mathbf{G}}
\newcommand{\AssignSatisfiableLiterals}{\mathrm{AssignSL}}
\begin{document}

\title{$3$-SAT Faster and Simpler - Unique-SAT Bounds for PPSZ Hold in General}

%Unique bounds for PPSZ hold in general}
%PPSZ has the Same Upper Bounds for $k$-SAT as for Unique $k$-SAT

\author{
Timon Hertli\\
\texttt{timon.hertli@inf.ethz.ch}\\
\vspace*{3mm}\\
%\hspace*{-5.5cm}
Institute for Theoretical Computer Science\\
%\hspace*{-5.5cm}
Department of Computer Science\\
%\hspace*{-5.5cm}
ETH Z\"urich, 8092 Z\"urich, Switzerland
}
\maketitle

\begin{abstract}
%The PPSZ algorithm by Paturi, Pudl\'ak, Saks, and Zane~\cite{ppsz} is the fastest known algorithm for Unique $k$-SAT, where the input formula does not have more than one satisfying assignment. For $k\geq 5$ the same bounds hold for general $k$-SAT. We show that this is also the case for $k=3,4$, using a slightly modified PPSZ algorithm. We do the analysis by defining a \emph{cost} for satisfiable CNF formulas, which we prove to decrease in each PPSZ step by a certain amount. This improves our previous best bounds with Moser and Scheder~\cite{hms10} for $3$-SAT from $O(1.32065^n)$ to $O(1.30704^n)$ and for $4$-SAT from $O(1.46928^n)$ to $O(1.46899^n)$.
The PPSZ algorithm by Paturi, Pudl\'ak, Saks, and Zane~\cite{ppsz} is the fastest known algorithm for Unique $k$-SAT, where the input formula does not have more than one satisfying assignment. For $k\geq 5$ the same bounds hold for general $k$-SAT. We show that this is also the case for $k=3,4$, using a slightly modified PPSZ algorithm. We do the analysis by defining a \emph{cost} for satisfiable CNF formulas, which we prove to decrease in each PPSZ step by a certain amount. This improves our previous best bounds with Moser and Scheder~\cite{hms10} for $3$-SAT to $O(1.308^n)$ and for $4$-SAT to $O(1.469^n)$.
\end{abstract}

\section{Introduction}
$k$-SAT and especially $3$-SAT is one of the most prominent NP-complete problems. While a polynomial algorithm seems very unlikely, much effort has been put into finding ``moderately exponential algorithms'', i.e.\ algorithms running in time $O(c^n)$ for $c<2$, where $n$ denotes the number of variables of the input formula. 
Unique $k$-SAT is the variant of the $k$-SAT problem where the input CNF formula is promised to have a unique or no satisfying assignment.
In 1998, Paturi, Pudl\'ak, Saks, and Zane~\cite{ppsz} presented a randomized algorithm for Unique $3$-SAT that runs in time $O(1.30704^n)$, where $n$ is the number of variables of the formula. For general $3$-SAT, only a running time of $O(1.3633^n)$ could be shown using a complicated analysis. Shortly afterwards, Sch\"oning~\cite{schoning1999} proposed a very simple algorithm with running time $O(1.33334^n)$ for $3$-SAT. In 2004, Iwama and Tamaki~\cite{it04} showed that Sch\"oning's algorithm can be combined with PPSZ to get a running time of $O(1.32373^n)$\footnote{Using the new version of~\cite{ppsz} immediately gives the bound $O(1.32267^{n})$, as stated in~\cite{rolf2006}.}. This bound was subsequently improved by Rolf~\cite{rolf2006}, Iwama, Seto, Takai, and Tamaki~\cite{istt10} and in~\cite{hms10} with Moser and Scheder to $O(1.32216^n)$, $O(1.32113^n)$, $O(1.32065^n)$, respectively. However, these bounds are still far from the bound $O(1.30704^n)$ for Unique 3-SAT.

PPSZ~\cite{ppsz} does the following: First, the input formula $\boldF$ is preprocessed by $s$-bounded resolution, meaning that all clauses obtainable by resolution when clauses of size at most $s$ are considered are added to $\boldF$. Then PPSZ goes through the variables in random order. In each step, a variable $x$ is permanently replaced by a Boolean value $a$ as follows: If there  a clause $\{x\}$ or $\{\ol{x}\}$ in the current formula, then $a$ is chosen accordingly and we call $x$ \emph{forced}. Otherwise $a$ is chosen uniformly at random from $\{0,1\}$ and we call $x$ \emph{guessed}. In Unique $k$-SAT it was shown that the probability that a variable is guessed is bounded from above by some quantity $S$ depending on $k$ on $\epsilon$. Using this, it is not hard to show that there exists an algorithm for Unique $k$-SAT running in time $O(2^{S\cdot n})$.
\subsection{Our Contribution}
We give an analysis of a slightly adapted PPSZ algorithm that achieves the same bound for $k$-SAT as for Unique $k$-SAT, which gives new bounds for $k=3,4$. The previously best known bounds from~\cite{hms10} are improved for $3$-SAT from $O(1.32065^n)$ to $O(1.30704^n)$ and for $4$-SAT from $O(1.46928^n)$ to $O(1.46899^n)$.
\begin{theorem}
\label{bound-3sat}
There exists a randomized algorithm for $3$-SAT with one-sided error that runs in time $O(1.30704^n)$.
\end{theorem}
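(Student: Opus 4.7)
The plan is to lower-bound the single-run success probability of the modified \PPSZ\ on an arbitrary satisfiable $3$-CNF formula $\boldF$ by $2^{-S_3 n - o(n)}$, where $S_3$ is the quantity governing the Unique $3$-SAT bound in~\cite{ppsz}. Amplifying by $2^{S_3 n}\cdot \mathrm{poly}(n)$ independent trials and choosing the preprocessing depth $s$ and slack $\epsilon$ appropriately then yields a one-sided-error algorithm of time $O((2^{S_3}+\delta)^n) = O(1.30704^n)$ for $\delta \to 0$. All the real work therefore lies in the single-run analysis.

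For Unique $k$-SAT, the authors of~\cite{ppsz} fix the unique satisfying assignment $\alpha$ and show $\E[\pguessed] \le S_k$ via the critical-clause-tree argument applied to each variable in a random permutation. In the general case this approach breaks: a partial assignment consistent with $\alpha$ can falsify a short clause whose only surviving literal fixes a variable $x$ against $\alpha$, so the algorithm is ``forced'' away from the target. Crucially, such a forcing can happen only because some \emph{other} satisfying assignment $\beta$ agrees with the partial assignment and sets $x$ the opposite way; intuitively, lost progress toward $\alpha$ should be payable by gained progress toward $\beta$. I would formalize this with a cost function $C(\boldF)$ on satisfiable CNF formulas satisfying three properties: (i) after $s$-bounded resolution, $C(\boldF) \le S_k n + o(n)$, proved assignment-by-assignment using the Unique-SAT critical-clause argument; (ii) on a uniformly random variable, the expected decrease of $C$ in one \PPSZ\ step is at least the expected number of guess bits that step contributes; (iii) $\psuccess \ge 2^{-C(\boldF)}$ by step-by-step tracking.

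The main obstacle is property (ii), a local trade-off between multiple satisfying assignments. A natural definition of $C$ is some min-plus or log-sum aggregation over $\alpha \in \sat(\boldF)$ of the unique-SAT critical-clause cost each such $\alpha$ would incur. The key technical lemma I expect to need is an ``exchange'' statement: when a variable is fixed, either the step is a guess, and each surviving satisfying assignment halves its probability contribution (so its cost drops by one bit), or it is forced by a short clause $D$, in which case every satisfying assignment falsifying $D$ is eliminated while those satisfying $D$ retain their cost, and the mass of the eliminated assignments exactly covers the missing guess contribution. Making this exchange work cleanly --- essentially showing that the per-variable critical-clause argument of~\cite{ppsz} composes additively over a set of assignments, and that the modification of \PPSZ\ is tailored precisely to make the composition tight --- is where the paper's novelty lives. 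Once that lemma is established, (i) and (iii) are essentially book-keeping, and the numerical $1.30704^n$ bound follows by plugging $k=3$, $\epsilon \to 0$, $s \to \infty$ into the integrals of~\cite{ppsz}.
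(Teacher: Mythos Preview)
Your high-level architecture matches the paper: define a cost $C(\boldF)\le S_3 n + o(n)$, prove $\psuccess\ge 2^{-C(\boldF)}$ by induction over one \PPSZ\ step, and amplify. Where you diverge is in the mechanism that makes property~(ii) go through, and that is where the real content of the paper sits.

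Two concrete gaps. First, the aggregation over $\sat(\boldF)$ is neither min-plus nor log-sum; the paper takes an \emph{expectation} of $\pguessed$ with respect to a very specific distribution $p(\boldF,\alpha)$, generated by repeatedly picking a uniformly random \emph{satisfying literal} and restricting (their $\AssignSatisfiableLiterals$). This distribution is engineered so that it satisfies the recursion $p(\boldF,\alpha)=\frac{1}{|\SL(\boldF)|}\sum_{l\in\alpha}p(\boldF^{[l]},\alpha)$, which is exactly what is needed to telescope against the analogous recursion for $\pguessed$ (their Lemma~\ref{lemma-pguessed-reduced}). A generic log-sum or min would not give you this cancellation.

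Second, the local trade-off is \emph{not} organized as ``forced clause eliminates assignments whose mass pays for the missing bit.'' The paper's pivot is the \emph{frozen/non-frozen} partition of variables. A non-frozen variable gets cost exactly $S$ (not its actual $\pguessed$); a frozen variable gets the averaged $\pguessed$. One step of \PPSZ\ is modeled as choosing $l$ uniformly from $\SL(\boldF)$, where $|\SL(\boldF)|=2\nN+\nC$. The cost of each non-frozen variable drops in expectation by $2S/|\SL|$, and each frozen variable by $1/|\SL|$ (their Lemmas~\ref{lemma-noncrit-decrease} and~\ref{lemma-crit-decrease}). The induction then needs
\[
\log\!\Big(\tfrac{|\SL|}{2n}\Big) + \tfrac{2S\,\nN + \nC}{|\SL|}\ \ge\ 0,
\]
which they close with Jensen and the elementary bound $\log(1+x)\ge \log(e)\,\tfrac{x}{1+x}$; numerically this works precisely because $S\ge S_3=2\ln 2-1$, so $2-2S\le 4-4\ln 2<\log e$. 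Your exchange lemma, as phrased, does not reproduce this inequality and does not explain where the slack $\log e - (2-2S)$ comes from; without that step the induction does not close.
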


\begin{theorem}
\label{bound-4sat}
There exists a randomized algorithm for $4$-SAT with one-sided error that runs in time $O(1.46899^n)$.
\end{theorem}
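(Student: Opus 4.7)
The plan is to deduce Theorem~\ref{bound-4sat} from a general technical result, applicable to all $k$: the slightly modified PPSZ algorithm solves general $k$-SAT with success probability at least $2^{-S_k n}/\mathrm{poly}(n)$, where $S_k$ is the per-variable guessing bound from the Unique-$k$-SAT analysis of~\cite{ppsz}. Standard repetition then converts this into a one-sided error algorithm running in time $O(2^{S_k n})$. For $k=4$, the PPSZ analysis of~\cite{ppsz} gives $2^{S_4} < 1.46899$, so Theorem~\ref{bound-4sat} is immediate once the general reduction is established.

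The heart of the argument is to define a nonnegative real-valued \emph{cost} $\mathrm{cost}(\boldG)$ on satisfiable CNF formulas $\boldG$ with two properties: (i) $\psuccess(\boldG) \ge 2^{-\mathrm{cost}(\boldG)}/\mathrm{poly}(n)$ for any satisfiable $\boldG$, and (ii) $\mathrm{cost}(\boldF) \le S_k \cdot n$, up to lower-order terms, on any preprocessed input $\boldF$. Property (i) would be proved by induction on the number of variables, via a per-step recurrence: if $\boldG'$ is the residual formula after one PPSZ step, one shows that the expected decrease $\E[\mathrm{cost}(\boldG) - \mathrm{cost}(\boldG')]$ is at least the expected number of random bits consumed in that step (over the uniform choice of the next variable and the random guess, where applicable). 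The base bound (ii) should follow essentially from the original Unique-$k$-SAT estimate on the expected number of guessed variables.

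The main obstacle will be defining $\mathrm{cost}$ and proving the per-step recurrence. In the Unique-$k$-SAT case~\cite{ppsz} directly bounds the probability that a given variable is \emph{guessed} rather than \emph{forced} by $S_k$, and since there is a unique satisfying assignment $\alpha^*$ the algorithm succeeds exactly when every guess coincides with $\alpha^*$. For general $k$-SAT, multiple satisfying assignments coexist and a naive potential such as $-\log|\sat(\boldG)|$ does not benefit from the original analysis. A natural direction is to take $\mathrm{cost}(\boldG)$ to be $-\log$ of a carefully weighted measure over $\sat(\boldG)$, chosen so that a guessed step on a variable $x$ reduces the cost by the logarithm of the probability that the sampled satisfying assignment assigns $x$ the guessed value. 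The technical core is then to show that $s$-bounded resolution forces enough structure that, after a convexity or averaging argument over the multiplicity of satisfying assignments, the expected per-variable contribution remains bounded by $S_k$, matching the Unique-SAT guarantee in the general setting.
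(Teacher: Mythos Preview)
Your high-level framework---define a cost on satisfiable formulas, prove $\psuccess(\boldF)\ge 2^{-\mathrm{cost}(\boldF)}$ by induction via a per-step decrease, and bound $\mathrm{cost}(\boldF)\le S_k n$---is exactly the scaffolding the paper uses. But the specific direction you propose for the cost, namely $-\log$ of a weighted measure on $\sat(\boldG)$, is not what the paper does and it is unclear how to make such a definition interact with the PPSZ guessing bound. The paper's cost is \emph{additive over variables}, not the logarithm of a measure: $c(\boldF)=\sum_{x} c(\boldF,x)$, where the per-variable contribution depends on whether $x$ is \emph{frozen} (takes the same value in every satisfying assignment) or \emph{non-frozen}. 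For non-frozen $x$ the cost is simply the constant $S=S_k+\epsilon_k(s)$; for frozen $x$ it is an average of $\pguessed(\boldF,x,\alpha,s)$ over $\alpha\in\sat(\boldF)$, weighted by a very particular distribution $p(\boldF,\alpha)$ generated by repeatedly picking a uniformly random satisfying literal. The frozen/non-frozen dichotomy, which you do not mention, is the conceptual core and without it the argument does not go through.

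The per-step inequality you describe (``expected cost decrease at least expected random bits consumed'') is also not quite the right accounting. What the paper actually proves is that, for $l$ uniform in the set of satisfying literals $\SL(\boldF)$, the expected cost drops by $2S/|\SL(\boldF)|$ per non-frozen variable and by $1/|\SL(\boldF)|$ per frozen variable. The induction then needs
\[
\log\!\left(\frac{|\SL(\boldF)|}{2n}\right)+c(\boldF)-\E_l[c(\boldF^{[l]})]\ \ge\ 0,
\]
and the point is that non-frozen variables give a \emph{smaller} cost decrease but simultaneously make $|\SL(\boldF)|>n$, so the prefactor $|\SL(\boldF)|/(2n)$ compensates. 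The closing numerical step uses $\log(1+x)\ge \log(e)\,x/(1+x)$ together with $S\ge S_3=2\ln 2-1$, and only then does the inequality hold; a generic ``convexity over the multiplicity of satisfying assignments'' does not produce this tradeoff. Finally, two smaller mismatches: the paper replaces $s$-bounded resolution preprocessing by on-the-fly $s$-implication and immediately fixes implied literals (this modification is essential for the induction to work), and it obtains $\psuccess\ge 2^{-c(\boldF)}$ with no polynomial slack.
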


Our analysis is directly based on the analysis for Unique $k$-SAT of~\cite{ppsz}, we do not use the part that considers general $k$-SAT. Let $\boldF$ be a satisfiable CNF formula. A variable $x$ of $F$ is called \emph{frozen}\footnote{We previously called frozen variables \emph{critical}. Such variables are also referred to as \emph{backbone}. Thanks to Ramamohan Paturi for making us aware of the existing terminology.} if it has the same value in all satisfying assignments, and \emph{non-frozen} otherwise. In~\cite{hms10} it was shown how to use this distinction to improve PPSZ: A frozen variable is good for PPSZ, while assigning any Boolean value to a non-frozen variable preserves satisfiability. Using a preprocessing step, about half of the variables can be assumed to be frozen for PPSZ. In this paper, we use a similar but more fine-grained approach:

We assign to each $k$-CNF formula $\boldF$ a \emph{cost} $c(\boldF)\leq S\cdot n$, where $S$ is the upper bound that a variable is guessed in Unique $k$-SAT. The main theorem then shows that $\PPSZ$ finds a satisfying assignment of $\boldF$ with probability at least $2^{-c(\boldF)}$. By a routine argument this gives a randomized algorithm with running time $O(2^{S\cdot n})$. Using the cost function, we are able to consider just one PPSZ step. We show that the cost decreases, depending on how many variables are frozen. If all variables are frozen, the cost decreases by $1$. If some variables are non-frozen, the decrease is smaller, but due to the presence of non-frozen variables, satisfiability is preserved with higher probability in this step. With this, the theorem can be obtained by induction.

In Section \ref{section-ppsz}, we review the PPSZ algorithm and prove its properties we need. In Section \ref{section-cost}, we introduce the cost function and do the analysis using the statements of Section \ref{section-ppsz}.
\subsection{Notation}
We adapt the notational framework as used in \cite{welzl05}. 
Let $V$ be a finite set of propositional \emph{variables}. A \emph{literal} $l$ over $x\in V$ is a variable $x$ or a complemented variable $\ol{x}$. If $l=\ol{x}$, then $\ol{l}$, the complement of $l$, is defined as $x$.
We assume that all lite\-rals are distinct. A \emph{clause} over $V$ is a finite set of lite\-rals over pairwise distinct variables from $V$.
A formula in \emph{CNF} (Conjunctive Normal Form) is a pair $\boldF=(F,V)$ where $F$ is a finite set of clauses over $V$. $V$ is the set of variables of $\boldF$ and denoted by $V(\boldF)$. We define $n(\boldF):=|V(\boldF)|$, the number of variables of $\boldF$. If $\boldF$ is understood from the context, we sometimes write $n$ for $n(\boldF)$.
A clause containing exactly one literal is called a \emph{unit clause}. We
say that $\boldF=(F,V)$ is a 
$(\le k)$\emph{-CNF} formula
if every clause of $F$ has
size at most $k$.

Let $V$ be a finite set of variables.
A (truth) \emph{assignment} on $V$ is a function $\alpha : V \rightarrow
\{0,1\}$ which assigns a Boolean value to each variable. A literal
$u=x$ (or $u=\ol{x}$) is \emph{satisfied by} $\alpha$ if $\alpha(x)=1$
(or $\alpha(x)=0$). A clause is \emph{satisfied by} $\alpha$ if it
contains a satisfied literal and a formula is \emph{satisfied by}
$\alpha$ if all of its clauses are. A formula is \emph{satisfiable} if
there exists a satisfying truth assignment to its variables.
Given a CNF formula $\boldF$, we denote by $\sat(\boldF)$ the set of assignments on $V(\boldF)$ that
satisfy $\boldF$. $k$\emph{-SAT} is the decision problem of deciding if a $(\leq k)$-CNF formula has a satisfying assignment.

Formulas can be manipulated by permanently assigning values to
variables. If $\boldF$ is a given CNF formula and $x \in V(\boldF)$
then assigning $x \mapsto 1$ satisfies all clauses containing $x$
(irrespective of what values the other variables in those clauses are
possibly assigned later) whilst it truncates all clauses containing
$\ol{x}$ to their remaining literals. Additionally, $x$ is removed from the variable set of $\boldF$.
We will write $\boldF^{[x\mapsto1]}$ (and analogously $\boldF^{[x\mapsto 0]}$) to
denote the formula arising from doing just this. 
%If $\gamma$ is a partial assignment, i.e.\ defined on a subset of $V(\boldF)$, we write $\boldF^{[\gamma]}$ for the formula we obtain from $\boldF$ by permanently setting the variables assigned in $\gamma$.
%
For notational convenience, we also write $x$ for $x\mapsto 1$ and $\ol{x}$ for $x\mapsto 0$, i.e.\ we write a literal instead of a variable-value pair. With this, we can view an assignment $\alpha$ also as the set of literals $l$ that are satisfied by $\alpha$. If the literal $l$ corresponds to $x\mapsto a$, we write $\boldF^{[l]}$ instead of $\boldF^{[x\mapsto a]}$.
%We say that two clauses $C_1$ and $C_2$ conflict on a variable $x$ if
%one of them contains $x$ and the other $\ol{x}$. We call $C_1$ and
%$C_2$ a resolvable pair if they conflict in exactly one variable $x$,
%and we define their \emph{resolvent} by $R(C_1,C_2):=(C_1 \cup C_2)
%\setminus \{x,\ol{x}\}$. It is easy to see that if $F$ contains a
%resolvable pair $C_1$, $C_2$, then $\sat(F)=\sat(F\cup \{R(C_1,C_2)\})$. A
%resolvable pair $C_1$, $C_2$ is $s$-bounded if $|C_1|\leq s$,
%$|C_2|\leq s$, and $|R(C_1,C_2)|\leq s$.
%
%By $\Resolve(F,s)$, we denote the set of clauses $C$ that have an
%$s$-bounded resolution deduction from $F$. By a straightforward
%algorithm, we can compute $\Resolve(F,s)$ in time
%$O\left(n^{3s}\poly\left(n\right)\right)$~\cite{ppsz}.
%
By choosing an element from a finite set u.a.r., we mean choosing it
uniformly at random. 
%By choosing an element u.a.r. from an closed real
%interval, we mean choosing it according to the continuous uniform
%distribution over this interval. 
Unless otherwise stated, all random
choices are mutually independent.
We denote by $\log$ the logarithm to the base 2. For the logarithm to
the base $e$, we write $\ln$. %We define $0\log 0:= 0$.
\section{The PPSZ Algorithm}
\label{section-ppsz}
In this section we present and slightly modify the PPSZ algorithm from~\cite{ppsz}. We also introduce the concept of \emph{frozen variables} from~\cite{hms10} (called critical variables there) and present the statements about the PPSZ algorithm we need later. To analyze PPSZ, we can ignore unsatisfiable formulas, as in that case PPSZ never returns a satisfying assignment. In the rest of this paper we fix an integer $k\geq 3$ and let $\boldF=(F,V)$ be an arbitrary satisfiable $(\leq k)$-CNF.
For our analysis, we need to change the PPSZ algorithm slightly. The PPSZ algorithm was defined using a preprocessing step of $s$-bounded resolution, i.e.\ resolution when only considering clauses of size at most $s$. We change this to a weaker concept we call $s$-implication\footnote{The concept of $s$-implication is from the lecture note draft by Dominik Scheder.}. 
We call a literal $s$-implied if it is implied by a subformula with at most $s$ clauses:
\begin{definition}
Let $\boldF=(F,V)$ be a satisfiable CNF formula. We say that a literal $l$ is \emph{$s$-implied} by $\boldF$ if there is a subset $G$ of $F$ with $|G|\leq s$ such that all satisfying assignments of $\boldG=(G,V)$ set $l$ to $1$. For notational convenience, we also say that a variable $x$ is $s$-implied, if one of the literals $x$ or $\ol{x}$ is $s$-implied.

We call a CNF formula $\boldF$ \emph{$s$-implication free} if no literal $l$ is $s$-implied.
\end{definition}

\begin{algorithm}
\caption{$\PPSZ($CNF formula $\boldF$, integer $s$)}
\begin{algorithmic}
  \STATE Choose $\beta$ u.a.r.\ from all assignments on $V(\boldF)$ 
  \STATE Choose $\pi$ u.a.r.\ from all permutations of $V(\boldF)$ 
  \RETURN $\PPSZ(\boldF,\beta, \pi, s)$
\end{algorithmic}
\end{algorithm}
\begin{algorithm}
\caption{$\PPSZ($CNF formula $\boldF$, assignment $\beta$, permutation $\pi$, integer $s$)}
\begin{algorithmic}
\STATE $V\gets V(\boldF)$
\STATE Let $\alpha$ be a 
partial assignment over $V$, initially the empty assignment
\FOR {all $x \in V$, according to $\pi$}
\WHILE {there is an $s$-implied literal $l=y\mapsto a$ in $\boldF$}
\STATE $\boldF\gets \boldF^{[y\mapsto a]}$
\STATE $\alpha(y)\gets a$
\ENDWHILE
\IF {$x\in V(\boldF)$}
\STATE $\boldF\gets \boldF^{[x\mapsto \beta(x)]}$
\STATE $\alpha(x)\gets \beta(x)$
\ENDIF
\ENDFOR
\RETURN $\alpha$
\end{algorithmic}
\end{algorithm}

Our analysis requires the following modification of PPSZ: Instead of processing the variables strictly step by step, we check after each step for which variables we know the value (by $s$-implication) and immediately set these accordingly. While the change to $s$-implication makes PPSZ only weaker, this modification makes the algorithm stronger; our approach does not work with the PPSZ algorithm proposed in~\cite{ppsz}, more on this is written in the conclusion.
In the following we fix $s$ large enough for the bounds we want to show, as described later. It is easily seen that the $\PPSZ$ algorithm we present runs in polynomial time if $s$ is a constant. 
We now give some definitions used in the analysis:
\begin{definition}[\cite{ppsz}]
Let $\beta$ and $\pi$ be chosen randomly as in $\PPSZ(\boldF,s)$.
We define the \emph{success probability} of PPSZ as the probability that it returns a satisfying assignment.
\[\psuccess(\boldF,s):=\Pr_{\pi,\beta}\left(\PPSZ(\boldF,\beta,\pi)\in \sat(\boldF)\right).\]
Consider a run of $\PPSZ(\boldF,s)$. For $x\in V(\boldF)$ we call $x$ \emph{forced} if the value of $x$ is determined by $s$-implication; we call it \emph{guessed} otherwise.
For $\alpha\in \sat(\boldF)$ we define the probability that $x$ is guessed w.r.t.\ $\alpha$ as follows:
\[\pguessed(\boldF,x,\alpha,s):=\Pr_\pi\left(x\textrm{ is guessed in }\PPSZ(\boldF,\alpha,\pi,s)\right).\]
To make notation easier, we extend the notation and allow $\alpha$ also to be over a variable set $W\supset V(\boldF)$; the variables in $W\setminus V(\boldF)$ will then be ignored. We also allow $x\not\in V(\boldF)$ and define in this case $\pguessed(\boldF,x,\alpha,s):=0$.
\end{definition}
\begin{definition}[\cite{hms10}]
We say that $x\in V(\boldF)$ is \emph{frozen} if all satisfiable assingments of $\boldF$ agree on $x$. We say that $x$ is \emph{non-frozen} otherwise.
\end{definition}

The probability that a frozen variable is guessed can be bounded:
\begin{theorem}[\cite{ppsz,hms10}]
\label{theorem-guessed-bound}
If $x$ is a frozen variable, then $\pguessed(\boldF,x,\alpha,s)\leq S_k + \epsilon_k(s)$, where $S_k:=\int_0^1 \frac{t^{1/(k-1)}-t}{1-t}dt$ and $\epsilon(s)$ goes to $0$ for $s\to\infty$.
\end{theorem}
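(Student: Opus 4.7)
The plan is to follow the critical-clause-tree argument of~\cite{ppsz}, adapted to frozen variables in~\cite{hms10}, and to check that it still goes through for the $s$-implication variant of PPSZ used here. Fix $\alpha\in\sat(\boldF)$; I want to bound, over random~$\pi$, the probability that $x$ is not $s$-implied when PPSZ reaches it.

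First I would build a critical clause tree $T$ of depth $d=d(s)$ rooted at~$x$. Since $x$ is frozen, flipping $x$ in $\alpha$ destroys satisfiability, so some clause $C\in F$ has its unique $\alpha$-satisfied literal on~$x$; this is the critical clause at the root. Recursively, at each internal node $y$ I would produce a critical clause for~$y$ relative to the partial flip of $\alpha$ along the path from~$x$ to~$y$, pruning a branch whenever the construction would reuse a variable, and stopping at depth~$d$. A standard induction (here the frozenness of $x$, and not merely $\alpha\in\sat(\boldF)$, is the essential ingredient) shows such critical clauses always exist, so $T$ contains at most $(k-1)^d$ clauses of~$F$.

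Next I would show that if every non-root variable of $T$ precedes~$x$ in~$\pi$, then $x$ is $s$-implied, provided $s\geq(k-1)^d$. After the partial assignment from $\alpha$ has been applied to the variables processed before~$x$, the at most $s$ clauses of $T$ form a subformula all of whose satisfying assignments agree with $\alpha$ on~$x$: the leaves become unit clauses pinning their variables, this information propagates up the tree through the critical clauses at internal nodes, and eventually forces the $x$-literal at the root. This is the point where reacting to $s$-implication directly, rather than only to unit clauses, is used.

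Finally, the probability that all non-root variables of $T$ precede~$x$ in a uniform~$\pi$ is estimated by the classical PPSZ computation: conditioning on the relative position $t$ of~$x$ and taking a product over the $(k-1)$-ary branches of~$T$, the failure probability approaches $S_k=\int_0^1\frac{t^{1/(k-1)}-t}{1-t}dt$ as $d\to\infty$. Letting $\epsilon_k(s)$ absorb the truncation error from finite $d$ completes the bound $\pguessed(\boldF,x,\alpha,s)\leq S_k+\epsilon_k(s)$. The main obstacle is the recursive construction in step one: ensuring that a critical clause exists at every internal node of $T$ is exactly where the proof genuinely uses that $x$ is frozen; once $T$ is in hand, the propagation argument and the integral computation are routine adaptations of~\cite{ppsz}.
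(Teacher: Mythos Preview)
Your overall plan matches the paper's proof sketch: build a critical clause tree using frozenness exactly as in~\cite{hms10}, translate the forcing condition into $s$-implication, and invoke the PPSZ probability calculation. However, step~2 misstates the sufficient condition. You claim that $x$ is $s$-implied whenever \emph{every} non-root variable of $T$ precedes $x$ in~$\pi$. That condition is certainly sufficient, but it is far too strong to yield the bound $S_k$: the probability that some node of a tree with $(k-1)^d$ vertices comes after~$x$ tends to~$1$ as $d\to\infty$, not to~$S_k$. The event whose failure probability actually converges to $S_k=\int_0^1 \frac{t^{1/(k-1)}-t}{1-t}\,dt$ is the \emph{cut} condition: along every root-to-leaf path in~$T$, at least one node precedes~$x$ (equivalently, the subtree of nodes that come after~$x$ and are connected to the root through such nodes does not reach depth~$d$). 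Your picture ``the leaves become unit clauses and the information propagates up'' describes only the extreme case where the cut sits entirely at depth~$d$; the integral you quote in step~3 computes the probability of the complement of the cut event, and the recursive product over the $(k-1)$-ary branches is precisely the recursion for the survival of this cut, not for ``all nodes precede~$x$''.

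Once step~2 is restated with the cut condition, the remainder of your outline---the use of frozenness to guarantee a critical clause at every internal node, the bound $s\ge (k-1)^d$ on the number of clauses involved, and the observation that bounded resolution can be replaced by $s$-implication---agrees with the paper's argument.
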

\begin{proof}[Proof Sketch]
If there is a unique (or sufficiently isolated) satisfying assignment, then~\cite{ppsz} gives us an upper bound for the probability that a variable is guessed. 
 In~\cite{hms10}, we have showed that this bound also holds for an arbitrary satisfying assignment, as long as the variable is \emph{frozen}. It is easily seen that this bound holds if we use $s$-implication instead of a preprocessing step of $s$-bounded resolution: In the analysis of~\cite{ppsz}, so-called critical clause trees are used to bound $\pguessed$. The only clauses $D$ used in the proof there are these with a resolution deduction using at most $m$ clauses of $\boldF$ with size at most $k$ each, for some appropriately chosen constant $m$. Then $D$ can be obtained by $(m\cdot k)$-bounded resolution from $\boldF$. This also means that $D$ is implied by at most $m$ clauses. If we restrict $\boldF$ to some literals and obtain $\boldF'$, then the clause $D$ restricted to these literals is now implied by at most $m$ clauses of $\boldF'$. Hence appearance of a unit clause in the algorithm of~\cite{ppsz} now becomes $s$-implication of a literal here for all unit clauses considered in the analysis of~\cite{ppsz}.
\end{proof}
For $k=3$, we can show that $S_3=2\ln 2 - 1$. For small $k$, $S_k$ and $2^{S_k}$ are approximately (rounded up):
\begin{tabular}{c|c|c}
$k$&$S_k$&$2^{S_k}$\\
\hline
3&0.3862944&1.307032\\
4&0.5548182&1.468984\\
5&0.6502379&1.569427\\
6&0.7118243&1.637874\\
\end{tabular}

In~\cite{ppsz} it was shown using Jensen's inequality how to use a bound for the probability that a variable is guessed to give an upper bound for the running time of Unique $k$-SAT. For $k\geq 5$, the same bound holds also for $k$-SAT using a more elaborate argument.
\begin{theorem}[\cite{ppsz}]
For $\epsilon>0$, there exists a randomized algorithm for Unique $k$-SAT with one-sided error that runs in time $O(2^{S_k n+ \epsilon n})$. For $k\geq 5$, this is also true for $k$-SAT.
\end{theorem}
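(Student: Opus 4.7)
The plan is to combine Theorem~\ref{theorem-guessed-bound} with Jensen's inequality for the Unique~$k$-SAT half of the statement, and then to invoke a more delicate averaging argument from~\cite{ppsz} for general $k$-SAT when $k \ge 5$.

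For Unique $k$-SAT, let $\alpha$ be the unique satisfying assignment of $\boldF$. Then every variable of $\boldF$ is trivially frozen, so Theorem~\ref{theorem-guessed-bound} applies to every $x \in V(\boldF)$. For a fixed permutation $\pi$, let $G(\pi)$ denote the number of guessed variables during $\PPSZ(\boldF,\beta,\pi,s)$. Forced variables always receive the $\alpha$-consistent value, while each guessed variable is correct independently with probability $1/2$, so
\[
\psuccess(\boldF,s) \;=\; \E_{\pi}\!\left[2^{-G(\pi)}\right].
\]
Jensen's inequality applied to the convex function $t \mapsto 2^{-t}$ yields $\psuccess(\boldF,s) \ge 2^{-\E_\pi[G(\pi)]}$, while linearity of expectation combined with Theorem~\ref{theorem-guessed-bound} gives $\E_\pi[G(\pi)] \le (S_k + \epsilon_k(s))\,n$. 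Hence $\psuccess(\boldF,s) \ge 2^{-(S_k+\epsilon_k(s))n}$. Choosing $s=s(\epsilon)$ so that $\epsilon_k(s) \le \epsilon$ and independently repeating $\PPSZ$ roughly $2^{(S_k+\epsilon)n}$ times amplifies success to a constant, producing a one-sided error algorithm with the claimed running time.

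For general $k$-SAT with $k \ge 5$, the obstacle is that when $|\sat(\boldF)| > 1$ not every variable is frozen, so Theorem~\ref{theorem-guessed-bound} cannot be applied directly. Following~\cite{ppsz}, I would sum the contributions over all satisfying assignments: since the events ``$\PPSZ$ outputs $\alpha$'' are disjoint for distinct $\alpha$,
\[
\psuccess(\boldF,s) \;\ge\; \sum_{\alpha \in \sat(\boldF)} \E_\pi\!\left[2^{-G_\alpha(\pi)}\right],
\]
where $G_\alpha(\pi)$ counts the variables guessed in the run whose guesses agree with $\alpha$. The hard part is to lower-bound this sum by $2^{-(S_k+\epsilon)n}$ uniformly in $\boldF$. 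The intuition is that each non-frozen variable roughly doubles the number of satisfying assignments that can route the argument through, and a careful bookkeeping shows that for $k \ge 5$ this extra mass is enough to offset the fact that Theorem~\ref{theorem-guessed-bound} only bounds $\pguessed$ for frozen variables. For $k = 3,4$ the trade-off in~\cite{ppsz} is insufficient, which is precisely the gap closed by the cost-based analysis developed in the remainder of this paper.
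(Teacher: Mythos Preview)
Your proposal is correct and matches the paper's treatment: the paper does not itself prove this cited theorem but only remarks that the Unique case follows from Jensen's inequality and that the $k\ge 5$ case uses ``a more elaborate argument'' from~\cite{ppsz}. Your Jensen computation for Unique $k$-SAT is exactly what is meant, and your outline for $k\ge 5$ correctly identifies the obstacle (non-frozen variables) and defers the technical bookkeeping to~\cite{ppsz}, just as the paper does.
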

In this paper we prove that for $k$-SAT we have the same bound as Unique $k$-SAT for \emph{all} $k$. For $k\geq 5$, this can be seen as an alternative proof for general $k$-SAT. Note however that because we immediately fix implied variables, the algorithm is slightly different. In the remainder of this section, we will introduce additional notation and state some properties of PPSZ we will need later.
\begin{definition}
We denote the \emph{non-frozen} variables of $\boldF$ by $\VN(\boldF)$ and set $\nN(\boldF):=|\VN(\boldF)|$. We denote the \emph{frozen} variables by $\VC(\boldF)$ and set $\nC(\boldF):=|\VC(\boldF)|$. The \emph{satisfying literals}, denoted by $\SL(\boldF)$, are the literals $l$ over $V(\boldF)$ s.t.\ $\boldF^{[l]}$ is satisfiable.
\end{definition}
The satisfying literals consist of all literals over non-frozen variables, and for each frozen variable of the literal that corresponds to the satisfying assignments of $\boldF$. It follows that $|\SL(\boldF)|=2\nN(\boldF)+\nC(\boldF)$. 

The following alternative definition of $\PPSZ(\boldF,s)$ is easily seen to be the same algorithm. We will use this later to bound $\psuccess(\boldF,s)$.
\begin{observation}
\label{observation-induction}
We can alternatively characterize $\PPSZ(\boldF,s)$ as follows: 
We first set all $s$-implied literals in $\boldF$ accordingly, and let $\alpha$ be the assignment consisting of these literals. $\boldF$ is now $s$-implication free. If $n(\boldF)=0$, then we return $\alpha$. Otherwise we choose $x$ from $V(\boldF)$ u.a.r.\ and $a$ from $\{0,1\}$ u.a.r.\ and let $l:=x\mapsto a$. Then we run $\PPSZ(\boldF^{[l]},s)$ and combine the returned assignment with $\alpha\cup\{l\}$.

It follows that if $\boldF$ is $s$-implication free and if $n(\boldF)\geq 1$, then
\[\psuccess(\boldF,s)=\frac{1}{2n}\sum_{l\in\SL(\boldF)}\psuccess(\boldF^{[l]},s).\]
\end{observation}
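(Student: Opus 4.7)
The plan is to verify two things: first, that the recursive description of $\PPSZ(\boldF,s)$ given in the observation produces the same output distribution as the iterative pseudocode; second, that the claimed recurrence for $\psuccess$ then falls out immediately.

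For the equivalence, I would peel off one outer-loop iteration of the iterative algorithm. Since no variables are assigned before the first iteration, its initial while loop sets exactly the $s$-implied literals of $\boldF$, which matches the opening step of the recursive version and produces the same partial assignment $\alpha$. The iterative algorithm then advances through $\pi$ until it reaches a variable $x$ not yet set, and assigns $x\mapsto\beta(x)$. The key combinatorial fact here is that if $\pi$ is a uniformly random permutation of $V$ and $U\subseteq V$ is a subset determined without reference to $\pi$, then the first element of $\pi$ lying in $U$ is uniformly distributed over $U$, and the ordering induced by $\pi$ on $U$ minus that element is a uniformly random permutation independent of the first element. Applied to $U = V(\boldF)$ after the initial $s$-implication, this makes $x$ uniform over $V(\boldF)$; combined with the independence and uniformity of $\beta(x)$ on $\{0,1\}$, the pair $(x,a)$ matches the recursive draw of $l$. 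The residual randomness is a fresh uniform permutation and assignment on $V(\boldF^{[l]})$, which by induction on $n$ is distributed exactly as the recursive call $\PPSZ(\boldF^{[l]},s)$.

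For the recurrence, assume $\boldF$ is $s$-implication free and $n\geq 1$; then the opening step assigns nothing and $\alpha = \emptyset$. By the equivalence, the output of $\PPSZ(\boldF,s)$ has the form $\{l\}\cup\PPSZ(\boldF^{[l]},s)$ with $l$ chosen uniformly among the $2n$ literals over $V(\boldF)$. An assignment $\{l\}\cup\gamma$ lies in $\sat(\boldF)$ exactly when $\gamma\in\sat(\boldF^{[l]})$, which forces $\boldF^{[l]}$ to be satisfiable, that is, $l\in\SL(\boldF)$. Conditioning on $l$ and discarding the zero contributions from $l\notin\SL(\boldF)$ yields the stated formula by the law of total probability.

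The main obstacle, minor though it is, is the bookkeeping in the equivalence step: one must check carefully that after peeling off the first iteration the leftover randomness is genuinely a fresh uniform permutation and assignment on the reduced variable set, so that the induction on $n$ actually closes. Once that point is pinned down, the recurrence is an immediate application of total probability to the one-step recursive description.
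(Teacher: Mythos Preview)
Your proposal is correct and is precisely the natural argument one would give; the paper itself states this as an observation that is ``easily seen'' and offers no proof, so there is nothing further to compare. The one bookkeeping point you flag---that the residual for-loop in the iterative version still ranges over some already-removed variables, which are harmlessly skipped so that the induced order on $V(\boldF^{[l]})$ is what matters---is indeed the only thing to watch, and you have identified it.
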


We need two statements about $\pguessed$ for our proof. The first tells us that if we restrict $\boldF$ to a literal of $\alpha$ the probability that $x$ is guessed w.r.t.\ $\alpha$ cannot increase.
\begin{lemma}
\label{lemma-pguessed-monotone}
For $l\in\alpha$ and $\alpha\in\sat(\boldF)$, we have $\pguessed(\boldF^{[l]},x,\alpha,s)\leq \pguessed(\boldF,x,\alpha,s)$.
\end{lemma}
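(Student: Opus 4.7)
The plan is to reformulate the event ``$x$ is guessed'' in a form that depends only on the set of variables preceding $x$ in the permutation, and then compare the two runs through a coupling. The key characterization to establish first: for any $\pi$, variable $x$ is guessed in $\PPSZ(\boldF,\alpha,\pi,s)$ iff $x$ is not contained in the iterated $s$-implication closure of $\boldF$ restricted by $\alpha$ on $B_\pi:=\{z:\pi(z)<\pi(x)\}$. Write $\mathrm{cl}_\boldF(T)$ for this closure, i.e.\ the set of literals iteratively forced by $\le s$-size subformulas of $\boldF$ once the restriction $T$ is assumed. The proof is a short induction on the FOR index: the WHILE at the start of each iteration closes the current formula, and since every literal $s$-implied by a restriction of $\boldF$ must agree with $\alpha$ (as $\alpha\in\sat(\boldF)$), the assigned literal set immediately before $x$'s IF step is exactly $\mathrm{cl}_\boldF(\alpha|_{B_\pi})$; hence $x$ is guessed iff this set does not contain $x$.

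The second ingredient is monotonicity of $\mathrm{cl}_\boldF$ in $T$, together with the restriction identity $\mathrm{cl}_{\boldF^{[l]}}(\alpha|_T)=\mathrm{cl}_\boldF(\alpha|_{T\cup\{y\}})$ (over $V(\boldF)\setminus\{y\}$), where $l=y\mapsto\alpha(y)$. This holds because every clause of $F^{[l]}$ is the $l$-restriction of a clause of $F$, so an $s$-size deduction in $\boldF^{[l]}$ lifts to an $s$-size deduction in $\boldF$ conjoined with the constraint $l$.

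Now couple: sample $\pi$ uniformly over permutations of $V(\boldF)$ and let $\pi'$ be the induced ordering on $V(\boldF)\setminus\{y\}$, which is uniform. With $B=\{z:\pi(z)<\pi(x)\}$ and $B'=\{z:\pi'(z)<\pi'(x)\}$, the case $\pi(y)<\pi(x)$ gives $B=B'\cup\{y\}$ and the case $\pi(y)>\pi(x)$ gives $B=B'$, so $B\subseteq B'\cup\{y\}$ in either case. Combining the characterization with monotonicity, the event ``$x$ is guessed in the run on $\boldF^{[l]}$ with $\pi'$'' is $x\notin\mathrm{cl}_\boldF(\alpha|_{B'\cup\{y\}})$, which implies $x\notin\mathrm{cl}_\boldF(\alpha|_B)$, i.e.\ $x$ is guessed in the run on $\boldF$ with $\pi$. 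Taking probabilities under the coupling yields the lemma. The degenerate case $x=y$ is trivial since then $\pguessed(\boldF^{[l]},x,\alpha,s)=0$ by definition.

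The main obstacle is the characterization step: one must check carefully that the order-dependent interleaving of WHILE closures and IF assignments produces, at the moment $x$ is processed, exactly the same literal set as first restricting $\boldF$ by all strictly preceding variables and then closing. This rests on the idempotence of closure and on $s$-implied literals being consistent with $\alpha$. Once the characterization is in hand, the lemma follows from the short coupling plus monotonicity argument above.
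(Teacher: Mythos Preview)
Your proof is correct and follows essentially the same approach as the paper: both couple a uniform permutation $\pi$ on $V(\boldF)$ with its restriction $\pi'$ to $V(\boldF)\setminus\{y\}$ and show that ``$x$ forced in $(\boldF,\pi)$'' implies ``$x$ forced in $(\boldF^{[l]},\pi')$'' because the witnessing $\le s$-clause set survives restriction by $l$. Your version is more explicit in formalizing the closure characterization and its monotonicity, which the paper's proof leaves implicit, but the underlying argument is the same.
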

\begin{proof}
Let $l=y\mapsto \alpha(y)$. Assume $x$ is $s$-implied in $\PPSZ(F,\alpha,\pi,s)$. Let $\pi'$ be the permutation obtained by removing $y$ from $\pi$. We claim that $x$ is $s$-implied in $\PPSZ(F^{[l]},\alpha,\pi',s)$: Consider the clause set $G$ that $s$-implies $x$ in $\PPSZ(F,\alpha,\pi,s)$. It follows from the definition of $s$-implication that restricting $G$ to $l$ gives a clause set that $s$-implies $x$, and hence $x$ is $s$-implied in $\PPSZ(F^{[l]},\alpha,\pi',s)$. The statement is now easily seen, as $\pi'$ has the distribution of a permutation uniformly at random chosen from all permutations on $V(F)\setminus\{y\}$.
\end{proof}
The second statement allows us to relate the probability that a variable $x$ of $\boldF$ is guesssed to the probability that $x$ is guessed if $\boldF$ is restricted by a random literal of $\alpha$. Intuitively, assume we have an upper bound for $\pguessed$. With some probability, $x$ is guessed right now. Hence if $x$ is not guessed right now, the probability to be guessed in the remainder must slightly decrease.
\begin{lemma}
\label{lemma-pguessed-reduced}
For $\alpha\in\sat(\boldF)$ and $x\in V(\boldF)$ s.t.\ $x$ is not $s$-implied, we have
\[\pguessed(\boldF,x,\alpha,s)-\frac{1}{n}=\frac{1}{n(\boldF)}\sum_{l\in\alpha}\pguessed(\boldF^{[l]},x,\alpha,s).\]
\end{lemma}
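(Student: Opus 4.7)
The plan is to apply the recursive characterization of $\PPSZ$ from Observation~\ref{observation-induction} and condition on the first variable that is actually guessed. Since $x$ is not $s$-implied in $\boldF$, the initial $s$-implication saturation step does not set $x$; thereafter, the algorithm picks some $y \in V(\boldF)$ uniformly at random and sets $y \mapsto \beta(y)$, then recurses on $\boldF^{[y\mapsto\beta(y)]}$. Inside $\pguessed$ we have fixed $\beta = \alpha$, so the literal that gets assigned at this step is $l = y\mapsto\alpha(y)$, which is in $\alpha$.

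From here the argument is a single-step conditioning on $y$. With probability $1/n$ the picked $y$ equals $x$, in which case $x$ is guessed right away and contributes $1$ to the indicator. With probability $(n-1)/n$ we have $y \neq x$, and by the alternative characterization, conditional on this event the subsequent random choices drive a fresh run of $\PPSZ$ on $\boldF^{[y\mapsto\alpha(y)]}$ with the same $\alpha$; hence the conditional probability that $x$ ends up guessed equals $\pguessed(\boldF^{[y\mapsto\alpha(y)]}, x, \alpha, s)$. Combining yields
\[
\pguessed(\boldF, x, \alpha, s) \;=\; \frac{1}{n} \;+\; \frac{1}{n}\sum_{y\in V(\boldF),\, y\neq x} \pguessed\bigl(\boldF^{[y\mapsto\alpha(y)]}, x, \alpha, s\bigr).
\]

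To match the form in the lemma I would then reindex the sum over $y$ as a sum over literals $l = y\mapsto\alpha(y) \in \alpha$ and add in the missing $l = x\mapsto\alpha(x)$ term for free, since $x \notin V(\boldF^{[l]})$ and hence the extended definition gives $\pguessed(\boldF^{[l]}, x, \alpha, s) = 0$. Subtracting $1/n$ from both sides produces the claimed identity.

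The main obstacle, and the only step beyond bookkeeping, is justifying the recursive reduction rigorously: one must verify that after conditioning on $\beta = \alpha$ and on the identity of the first guessed variable $y$, the remaining random ingredients used by the subsequent execution are distributed precisely as a fresh independent run of $\PPSZ(\boldF^{[y\mapsto\alpha(y)]}, \alpha, \pi', s)$ with $\pi'$ uniform over $V(\boldF) \setminus \{y\}$. This is the same ``restrict-and-recurse'' symmetry that underlies Observation~\ref{observation-induction}, and the key point is that conditioning on $\beta = \alpha$ (instead of marginalizing over it) does not break the uniformity of $\pi'$ or the independence of the subcall's internal randomness.
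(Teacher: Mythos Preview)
Your proposal is correct and follows essentially the same approach as the paper: condition on the first variable processed, split into the cases $y=x$ (contributing $1/n$ since $x$ is not $s$-implied) and $y\neq x$ (reducing to $\pguessed(\boldF^{[y\mapsto\alpha(y)]},x,\alpha,s)$), and absorb the $y=x$ term back into the sum using the convention $\pguessed(\boldF^{[l]},x,\alpha,s)=0$ when $x\notin V(\boldF^{[l]})$. The only cosmetic difference is that the paper conditions directly on the first element of the random permutation $\pi$ rather than routing through Observation~\ref{observation-induction}, which lets it sidestep your discussion of the saturation step; your ``main obstacle'' is exactly what the paper dismisses as ``easily seen.''
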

\begin{proof}
Let $\pi$ be a random permutation on $V(\boldF)$ and let $y$ be the variable that comes first in $\pi$. We have by definition
\[\pguessed(\boldF,x,\alpha,s)=\Pr_\pi\left(x\textrm{ is guessed in }\PPSZ(\boldF,\alpha,\pi,s)\right).\]
By the law of total probability, this is
\[=\E_{y}\left[\Pr_{\pi}\left(x\textrm{ is guessed in }\PPSZ(\boldF,\alpha,\pi,s)\ |\ y\textrm{ comes first in }\pi\right)\right].\]
If $x=y$, then $x$ is always guessed, as $x$ is not $s$-implied. If $x\not=y$, then the probability under the expectation is easily seen to be $\pguessed(\boldF^{[y\mapsto\alpha(y)]},x,\alpha,s)$. Writing the expectation as a sum and using that $\pguessed(\boldF^{[y\mapsto\alpha(y)]},y,\alpha,s)$ is defined as $0$ gives us
\[\pguessed(\boldF,x,\alpha,s)=\Pr_{y}[x=y]1+\Pr_{y}[x\not= y]\frac{1}{n(\boldF)-1}\sum_{l\in\alpha}\pguessed(\boldF^{[l]},x,\alpha,s).\]
Trivially $\Pr_{y}[x=y]=\frac{1}{n(\boldF)}$ and $\Pr_{y}[x\not= y]=\frac{n(\boldF)-1}{n(\boldF)}$; the statement follows now easily.
\end{proof}
\section{Analysis using a Cost Function}
\label{section-cost}
To define the cost function, we first need to give a probability distribution on the set of all satisfying assignments of a CNF formula. We do this by defining a random process that repeatedly picks a satisfying literal:
\begin{definition}
We define the random process $\AssignSatisfiableLiterals(\boldF)$ that produces an assignment on $V(\boldF)$ as follows: Start with the empty assignment $\alpha$, and repeat the following step until $V(\boldF)=\emptyset$: Choose a satisfying literal $l\in\SL(\boldF)$ and add $l$ to $\alpha$; then let $\boldF\gets\boldF^{[l]}$. At the end, output $\alpha$.

Let $\alpha$ be an assignment on $V(F)$. Then $p(\boldF,\alpha)$ is defined as the probability that $\AssignSatisfiableLiterals(\boldF)$ returns $\alpha$. If $\alpha$ is defined on some $W\supset V(F)$, $p(\boldF,\alpha)$ is defined as the probability that $\AssignSatisfiableLiterals(\boldF)$ returns $\alpha$ restricted to $V(F)$.
\end{definition}
From the definition we observe the following:
\begin{observation}
\label{observation-distribution}
$\AssignSatisfiableLiterals(\boldF)$ always returns a satisfying assignment of $F$. 
Furthermore $p(\boldF,\alpha)$ defines a probability distribution on $\sat(\boldF)$.
If $n(\boldF)=0$, then $p(\boldF,\alpha)=1$. Otherwise we have the relation
\[p(\boldF,\alpha)=\frac{1}{|\SL(\boldF)|}\sum_{l\in\alpha}p(\boldF^{[l]},\alpha).\]
\end{observation}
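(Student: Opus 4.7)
The plan is to prove all three assertions by induction on $n(\boldF)$, exploiting the one-step recursive structure of $\AssignSatisfiableLiterals$.

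For the first assertion, I would induct on $n(\boldF)$. The base case $n(\boldF)=0$ is trivial: the empty assignment is returned immediately and vacuously satisfies $\boldF$. For the inductive step, satisfiability of $\boldF$ guarantees $\SL(\boldF)\neq\emptyset$, so the process picks some $l\in\SL(\boldF)$. By definition of $\SL(\boldF)$, $\boldF^{[l]}$ is satisfiable and has strictly fewer variables, so the induction hypothesis yields a satisfying assignment of $\boldF^{[l]}$; combining it with $l$ gives a satisfying assignment of $\boldF$. The second assertion is then a direct consequence: $p(\boldF,\cdot)$ is the distribution of a random variable that, by the first assertion, is supported on $\sat(\boldF)$, hence it sums to $1$ on $\sat(\boldF)$ and is zero outside.

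For the third assertion, the base case $n(\boldF)=0$ is immediate since the process outputs the empty assignment at once, which agrees with $\alpha$ restricted to $V(\boldF)=\emptyset$. For the inductive step I would condition on the first literal $l$ drawn by the process. Because $l$ is uniform on $\SL(\boldF)$ and the subsequent iterations form an independent call of $\AssignSatisfiableLiterals(\boldF^{[l]})$, the law of total probability gives
\[p(\boldF,\alpha)=\frac{1}{|\SL(\boldF)|}\sum_{l\in\SL(\boldF)}q(l,\alpha),\]
where $q(l,\alpha)$ is the conditional probability that the full output matches $\alpha$ on $V(\boldF)$ given the first choice $l$. If $l\notin\alpha$, the first step already deviates from $\alpha$ and $q(l,\alpha)=0$; if $l\in\alpha$, the output matches iff the recursive call returns $\alpha$ restricted to $V(\boldF^{[l]})$, whence $q(l,\alpha)=p(\boldF^{[l]},\alpha)$. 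This produces the sum over $\SL(\boldF)\cap\alpha$.

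It remains to reconcile this with the stated form $\sum_{l\in\alpha}$. For $\alpha\in\sat(\boldF)$, every literal $l\in\alpha$ over a variable of $V(\boldF)$ lies in $\SL(\boldF)$ (since $\alpha$ itself witnesses satisfiability of $\boldF^{[l]}$), so the two sums coincide. For $\alpha\notin\sat(\boldF)$ both sides vanish: the left side by the second assertion, and each summand on the right because $\alpha$ restricted to $V(\boldF^{[l]})$ still fails to satisfy $\boldF^{[l]}$. The only bookkeeping subtlety, and the place I would spell things out carefully, is the convention when $\alpha$ is defined on some $W\supset V(\boldF)$: the sum $\sum_{l\in\alpha}$ must be read as ranging only over literals over variables of $V(\boldF)$, since $\boldF^{[l]}$ is only defined for such $l$. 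There is no real obstacle here; the verification is purely bookkeeping with the law of total probability.
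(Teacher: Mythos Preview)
Your argument is correct and is precisely the natural unpacking of the recursion; the paper itself offers no proof beyond the phrase ``From the definition we observe the following'', so your induction on $n(\boldF)$ with conditioning on the first chosen literal is exactly what is being left implicit. The only superfluous piece is your treatment of $\alpha\notin\sat(\boldF)$, which the paper never needs (the recurrence is only ever invoked for $\alpha\in\sat(\boldF)$), but this does no harm.
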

Note that this distribution is not the uniform distribution: As an example consider the CNF formula corresponding to $x\vee y$. The probability that both $x$ and $y$ are set to $1$ is $1/4$, while the probability that exactly one of $x$ and $y$ is set to $0$ is $3/8$ each.

Using the probability distribution $p(\boldF,\alpha)$, we define a cost function on satisfiable $k$-CNF formulas. In the following fix an integer $s\geq 0$ and let $S:= S_k - \epsilon_k(s)$ s.t.\ $\pguessed(\boldF,x,\alpha,s)\leq S$ for all satisfiable $k$-CNF $\boldF$ where $x$ is frozen, as in Theorem \ref{theorem-guessed-bound}.
\begin{definition}
For a $(\leq k)$-CNF formula $\boldF$ with variable set $V(\boldF)$ we define the \emph{cost of $x$ in $\boldF$} as
\[c(\boldF,x):=\begin{cases}
0&\textrm{ if }x\not\in V(\boldF)\\
S&\textrm{ if }x\in \VN(\boldF)\\
\sum_{\alpha\in\sat(\boldF)}p(\boldF,\alpha)\pguessed(\boldF,x,\alpha,s)&\textrm{ if }x\in \VC(\boldF)\\
\end{cases}\]
We define the \emph{cost of $\boldF$} as $c(\boldF):=\sum_{x\in V(\boldF)}c(\boldF,x).$
\end{definition}
The cost of a variable that does not occur in the formula is set to $0$ for notational convenience. 
It follows from the definition that $c(\boldF,x)\leq S$ and hence $c(\boldF)\leq n(\boldF) S$. The cost function gives a lower bound on the success probability of PPSZ:
\begin{theorem}
\label{theorem-cost-bound}
$\psuccess(\boldF,s)\geq 2^{-c(\boldF)}.$
\end{theorem}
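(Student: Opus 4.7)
The plan is to prove Theorem~\ref{theorem-cost-bound} by induction on $n(\boldF)$. The base case $n(\boldF)=0$ is immediate, since then $c(\boldF)=0$ and $\psuccess(\boldF,s)=1$. For the inductive step I would separately handle the case when $\boldF$ is not $s$-implication free. If some literal $l$ (over variable $y$) is $s$-implied, then $y\in\VC(\boldF)$ with correct literal $l$, and Observation~\ref{observation-induction} implies $\psuccess(\boldF,s)=\psuccess(\boldF^{[l]},s)$ because PPSZ applies $s$-implications before any random choice. A short sub-induction using the recursive identity for $p$ from Observation~\ref{observation-distribution} shows $p(\boldF,\alpha)=p(\boldF^{[l]},\alpha)$ in the extended notation (because $|\SL(\boldF^{[l]})|=|\SL(\boldF)|-1$ and the only literal removed is the forced one over $y$). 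Combined with Lemma~\ref{lemma-pguessed-monotone} this yields $c(\boldF^{[l]},x)\le c(\boldF,x)$ for each $x\ne y$, while $c(\boldF,y)\ge 0$, so $c(\boldF^{[l]})\le c(\boldF)$; the inductive hypothesis on $\boldF^{[l]}$ then finishes this case, and iterating reduces to the $s$-implication-free case.

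Now assume $\boldF$ is $s$-implication free. Observation~\ref{observation-induction} gives $\psuccess(\boldF,s)=\tfrac{1}{2n}\sum_{l\in\SL(\boldF)}\psuccess(\boldF^{[l]},s)$. Applying the inductive hypothesis and then Jensen's inequality to the convex function $t\mapsto 2^{-t}$ (with the uniform distribution over $\SL(\boldF)$) yields
\[
\psuccess(\boldF,s)\;\ge\;\frac{|\SL(\boldF)|}{2n}\cdot 2^{-\E_L\left[c(\boldF^{[L]})\right]},
\]
where $L$ is uniform on $\SL(\boldF)$. So it suffices to prove the \emph{key decrease inequality}
\[
c(\boldF)-\E_L\!\left[c(\boldF^{[L]})\right]\;\ge\;\log\frac{2n}{|\SL(\boldF)|}\;=\;\log\frac{2(\nN+\nC)}{2\nN+\nC}.
\]

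To prove the key inequality I would decompose $c(\boldF)-\E_L[c(\boldF^{[L]})]=\sum_{x\in V(\boldF)}\bigl(c(\boldF,x)-\E_L[c(\boldF^{[L]},x)]\bigr)$ and bound each summand. For a non-frozen $x$, $c(\boldF,x)=S$ and the bound $c(\boldF^{[l]},x)\le S$ holds uniformly (using Theorem~\ref{theorem-guessed-bound} in the subcase where $x$ becomes frozen after restricting), so the per-variable contribution is at least $2S/|\SL(\boldF)|$. For a frozen $x$, I would combine Lemma~\ref{lemma-pguessed-reduced} (averaged over $\alpha\sim p(\boldF,\cdot)$) with the recursion $p(\boldF,\alpha)=\tfrac{1}{|\SL|}\sum_{m\in\alpha}p(\boldF^{[m]},\alpha)$ from Observation~\ref{observation-distribution}. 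Plugging the latter into the former isolates a ``diagonal'' term equal to $\E_L[c(\boldF^{[L]},x)]$ plus nonnegative off-diagonal cross-terms, which together should yield an average contribution of at least $1/n$ per frozen variable. Summing, the total decrease is at least $2S\nN/|\SL(\boldF)|+\nC/n$, which is at least $\log(2n/|\SL(\boldF)|)$: indeed, already $\nC/n\ge\log(2/(2-\nC/n))$ follows from $t+\log(2-t)\ge 1$ on $t\in[0,1]$, verified by checking the endpoints and that the derivative changes sign only once.

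The main obstacle is the frozen case. Lemma~\ref{lemma-pguessed-reduced} naturally produces the joint distribution where $\alpha\sim p(\boldF,\cdot)$ and $L$ is uniform on $\alpha$, whereas the quantity $\E_L[c(\boldF^{[L]},x)]$ is defined via the joint where $L$ is uniform on $\SL(\boldF)$ and $\alpha\sim p(\boldF^{[L]},\cdot)$. These have the same marginal on $\alpha$ but different marginals on $L$ (Joint~2 reweights frozen literals by $n/|\SL|$ relative to Joint~1), and coincide exactly in the all-frozen (Unique-SAT) case, where the calculation cleanly yields a cost decrease of $1$. Extending this to the general case requires carefully accounting for the nonnegative off-diagonal terms produced by applying Observation~\ref{observation-distribution} inside the expansion, which is the technical crux and where I expect the bulk of the calculation to lie.
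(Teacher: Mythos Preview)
Your overall architecture---induction on $n(\boldF)$, reduction to the $s$-implication-free case via Lemma~\ref{lemma-probability-properties}, then Jensen's inequality followed by a per-variable decrease estimate---is exactly the paper's approach. Your treatment of the non-frozen variables also matches Lemma~\ref{lemma-noncrit-decrease}. The gap is in the frozen case and, as a consequence, in the final numerical step.

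For a frozen variable $x$ you aim for a decrease of $1/n$, but the decomposition you describe does not give this. Averaging Lemma~\ref{lemma-pguessed-reduced} over $\alpha\sim p(\boldF,\cdot)$ and substituting $p(\boldF,\alpha)=\frac{1}{|\SL(\boldF)|}\sum_{m\in\alpha}p(\boldF^{[m]},\alpha)$ yields
\[
c(\boldF,x)-\frac{1}{n}\;=\;\frac{1}{n\,|\SL(\boldF)|}\sum_{\alpha}\sum_{l\in\alpha}\sum_{m\in\alpha}p(\boldF^{[m]},\alpha)\,\pguessed(\boldF^{[l]},x,\alpha,s),
\]
so the diagonal ($m=l$) term equals $\frac{1}{n}\,\E_L[c(\boldF^{[L]},x)]$, not $\E_L[c(\boldF^{[L]},x)]$. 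The nonnegative off-diagonal terms then work against you, and the best one extracts (by bounding $\pguessed(\boldF^{[l]},x,\alpha,s)\le\pguessed(\boldF,x,\alpha,s)$ on the off-diagonal and using $\sum_m p(\boldF^{[m]},\alpha)=|\SL(\boldF)|\,p(\boldF,\alpha)$) is exactly the paper's Lemma~\ref{lemma-crit-decrease}: a decrease of $1/|\SL(\boldF)|$, not $1/n$. This is the correct target; the mismatch of marginals you identify in your last paragraph is precisely what forces the denominator to be $|\SL(\boldF)|$ rather than $n$.

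This matters for the endgame. With the correct bound the needed inequality is
\[
\frac{2S\,\nN(\boldF)+\nC(\boldF)}{|\SL(\boldF)|}\;\ge\;\log\frac{2n(\boldF)}{|\SL(\boldF)|},
\]
and here the frozen contribution $\nC/|\SL|$ alone is \emph{not} enough (e.g.\ $\nN=\nC=1$ gives $1/3<\log(4/3)$). The paper closes the gap by keeping the $2S\,\nN/|\SL|$ term and using $\log(1+u)\ge\log(e)\,\frac{u}{1+u}$ together with $S\ge S_3=2\ln 2-1$, which is equivalent to checking $2-2S\le\log e$ at the tight endpoint $\nN=0$. Your concavity argument $t+\log(2-t)\ge 1$ is correct for $\nC/n$, but that quantity is not what the frozen-variable analysis actually delivers.
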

To obtain Theorems \ref{bound-3sat} and \ref{bound-4sat}, we choose $s$ such that $\epsilon_k(s)$ becomes small enough and $2^{S}<1.30704$ for $3$-SAT and $2^{S}<1.46899$ for $4$-SAT. By $O(2^{Sn})$ independent repetitions of PPSZ, the claimed randomized exponential algorithm can then be obtained by a routine argument. In the remainder of this section, we prove Theorem \ref{theorem-cost-bound}. We need the following lemma about $p(\boldF,\alpha)$:
\begin{lemma}
\label{lemma-probability-properties}
For $l\in\alpha$, we have $p(\boldF^{[l]},\alpha)\geq p(\boldF,\alpha)$. If $l$ is over a frozen variable, then $p(\boldF^{[l]},\alpha)=p(\boldF,\alpha)$, and $c(\boldF^{[l]})\leq c(\boldF)$.
\end{lemma}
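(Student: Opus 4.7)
My plan is to prove all three parts by strong induction on $n(\boldF)$, with parts 2 and 3 falling out of the machinery developed for part 1.

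For part 1, the base case $n(\boldF)=0$ is immediate. In the inductive step, write $x$ for the variable of $l$, and start from the recursion in Observation~\ref{observation-distribution},
\[p(\boldF,\alpha)=\frac{1}{|\SL(\boldF)|}\left[p(\boldF^{[l]},\alpha)+\sum_{l'\in\alpha,\,l'\neq l}p(\boldF^{[l']},\alpha)\right].\]
To each remaining $p(\boldF^{[l']},\alpha)$ I would apply the inductive hypothesis of part 1 on $\boldF^{[l']}$ with the literal $l$, obtaining $p(\boldF^{[l']},\alpha)\leq p(\boldF^{[l,l']},\alpha)$ (noting $(\boldF^{[l']})^{[l]}=\boldF^{[l,l']}$). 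Re-applying Observation~\ref{observation-distribution} to $\boldF^{[l]}$ then identifies $\sum_{l'\neq l}p(\boldF^{[l,l']},\alpha)=|\SL(\boldF^{[l]})|\cdot p(\boldF^{[l]},\alpha)$, so
\[p(\boldF,\alpha)\leq\frac{1+|\SL(\boldF^{[l]})|}{|\SL(\boldF)|}\,p(\boldF^{[l]},\alpha),\]
and the claim reduces to the combinatorial bound $|\SL(\boldF^{[l]})|+1\leq|\SL(\boldF)|$.

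I would establish this bound from $|\SL|=2\nN+\nC$ by case analysis. If $x$ is frozen in $\boldF$ with value $l$, then $\sat(\boldF)\leftrightarrow\sat(\boldF^{[l]})$ is a bijection preserving the frozen/non-frozen status of every other variable, giving $|\SL(\boldF^{[l]})|=|\SL(\boldF)|-1$. If $x$ is non-frozen, both $l$ and $\ol{l}$ are lost from $\SL$ and other variables may become frozen, giving $|\SL(\boldF^{[l]})|\leq|\SL(\boldF)|-2$. Part 2 falls out of the frozen case: the outer factor is exactly $1$, and the inductive inequalities are all equalities because $x$ remains frozen in every $\boldF^{[l']}$ (frozenness is preserved under restrictions by satisfying literals), so the inductive hypothesis of part 2 applied to $\boldF^{[l']}$ and $l$ gives $p(\boldF^{[l']},\alpha)=p(\boldF^{[l,l']},\alpha)$.

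For part 3, I would combine part 2 with Lemma~\ref{lemma-pguessed-monotone}. Non-frozen variables $y\neq x$ stay non-frozen in $\boldF^{[l]}$ by the bijection, so $c(\boldF^{[l]},y)=S=c(\boldF,y)$. For each frozen $y\neq x$, combining the bijection, the equality $p(\boldF,\beta)=p(\boldF^{[l]},\beta)$ from part 2, and the monotonicity $\pguessed(\boldF^{[l]},y,\beta,s)\leq\pguessed(\boldF,y,\beta,s)$ from Lemma~\ref{lemma-pguessed-monotone} gives $c(\boldF^{[l]},y)\leq c(\boldF,y)$ after summing over $\beta$; and $c(\boldF^{[l]},x)=0\leq c(\boldF,x)$. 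Summing over all variables yields $c(\boldF^{[l]})\leq c(\boldF)$. The main obstacle throughout is the bookkeeping in part 1: fitting the inductive bound exactly into the $p(\boldF^{[l]},\alpha)$ recursion so everything reduces to the clean combinatorial inequality $|\SL(\boldF^{[l]})|+1\leq|\SL(\boldF)|$, whose slack (zero for frozen, at least one for non-frozen) is precisely what powers parts 2 and 3.
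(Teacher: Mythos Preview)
Your argument is correct, but it takes a genuinely different route from the paper's proof.

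The paper argues directly via a coupling: run $\AssignSatisfiableLiterals(\boldF)$ and condition on the event that the literal $l$ is picked at some step. Conditioned on this event, the output (with $l$ removed) has exactly the distribution of $\AssignSatisfiableLiterals(\boldF^{[l]})$; since the output can equal $\alpha$ only if $l$ is picked, one gets $p(\boldF,\alpha)=\Pr[l\text{ picked}]\cdot p(\boldF^{[l]},\alpha)\le p(\boldF^{[l]},\alpha)$, with equality when $l$ is over a frozen variable because then $l$ is picked with probability~$1$. The cost inequality is then deduced from part~2 together with Lemma~\ref{lemma-pguessed-monotone}, exactly as you do.

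Your proof instead unfolds the recursion of Observation~\ref{observation-distribution} and reduces everything to the combinatorial bound $|\SL(\boldF^{[l]})|+1\le|\SL(\boldF)|$, tracking when it is tight. This is more computational but has the virtue of being fully explicit: the coupling step the paper dismisses as ``easily checked by induction'' is essentially the same induction you carry out, so your write-up is more self-contained. Conversely, the paper's formulation isolates the single probabilistic quantity $\Pr[l\text{ is eventually picked}]$ as the ratio $p(\boldF,\alpha)/p(\boldF^{[l]},\alpha)$, which is conceptually cleaner and makes the frozen/non-frozen dichotomy immediate. Your part~3 argument coincides with the paper's. One small remark: part~3 does not actually require the inductive hypothesis on part~3 itself, only the already-established part~2 and Lemma~\ref{lemma-pguessed-monotone}, so you could state it as a direct consequence rather than as part of the simultaneous induction.
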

\begin{proof} 
Consider
$\AssignSatisfiableLiterals(\boldF)$ given that $l$ is chosen at some point of time, let $\alpha'$ denote the output. The distribution of $\alpha'\setminus\{l\}$ is the same as the output of $\AssignSatisfiableLiterals(\boldF^{[l]})$, as is easily checked by induction. If $l$ is \emph{not} chosen in $\AssignSatisfiableLiterals(\boldF)$, then the output is never $\alpha$. Hence the probability that $\AssignSatisfiableLiterals(\boldF^{[l]})$ returns $\alpha\setminus\{l\}$ is at least the probability that $\AssignSatisfiableLiterals(\boldF)$ returns $\alpha$. This proves the first statement. If $l$ is over a frozen variable, then in $\AssignSatisfiableLiterals(\boldF)$ $l$ must be chosen at some point, and equality holds. The inequality on the costs now follows from Lemma \ref{lemma-pguessed-monotone}.
\end{proof}

If $\boldF$ is $s$-implication free the cost decreases by a certain amount depending on how many variables are frozen and non-frozen:
\begin{theorem}
\label{theorem-cost-decrease}
Suppose $\boldF$ is $s$-implication free.
For $l$ chosen u.a.r\ from $\SL(\boldF)$, we have
\[\E_l[c(\boldF^{[l]})]\leq c(\boldF)-\nN(\boldF)\frac{2S}{|\SL(\boldF)|}-\nC(\boldF)\frac{1}{|\SL(\boldF)|}.\]
\end{theorem}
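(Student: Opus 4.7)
The plan is to split the expectation variable-by-variable: $\E_l[c(\boldF^{[l]})] = \sum_{x \in V(\boldF)} \E_l[c(\boldF^{[l]}, x)]$, using the fact that $c(\boldF^{[l]}, x) = 0$ whenever $x$ is the variable of $l$. I will then argue that the non-frozen and frozen variables produce the two negative terms in the inequality separately.

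For a non-frozen $x$, both literals over $x$ lie in $\SL(\boldF)$, so the variable of $l$ coincides with $x$ with probability $2/|\SL(\boldF)|$. For every other $l$ the variable $x$ remains in $V(\boldF^{[l]})$, and regardless of whether $x$ is still non-frozen (cost exactly $S$) or has become frozen in $\boldF^{[l]}$ (cost at most $S$ by Theorem \ref{theorem-guessed-bound} applied to $\boldF^{[l]}$), we have $c(\boldF^{[l]}, x) \leq S$. This yields $\E_l[c(\boldF^{[l]}, x)] \leq S - 2S/|\SL(\boldF)|$, and summing over the $\nN(\boldF)$ non-frozen variables produces the first negative term.

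For a frozen $x$ I aim at the per-variable bound $\E_l[c(\boldF^{[l]}, x)] \leq c(\boldF, x) - 1/|\SL(\boldF)|$. After multiplying by $|\SL(\boldF)|$, expanding each $c(\boldF^{[l]}, x)$ via the bijection between $\sat(\boldF^{[l]})$ and $\{\alpha \in \sat(\boldF) : l \in \alpha\}$, and swapping the order of summation, the quantity becomes
\[\sum_{\alpha \in \sat(\boldF)} \sum_{l \in \alpha,\, \mathrm{var}(l) \neq x} p(\boldF^{[l]}, \alpha)\, \pguessed(\boldF^{[l]}, x, \alpha, s).\]
I will split $p(\boldF^{[l]}, \alpha) = p(\boldF, \alpha) + \Delta_l^\alpha$, where $\Delta_l^\alpha \geq 0$ by Lemma \ref{lemma-probability-properties}, and treat the two parts separately. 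For the $p(\boldF, \alpha)$-weighted part, Lemma \ref{lemma-pguessed-reduced} applies (since $\boldF$ is $s$-implication free so $x$ is not $s$-implied, and the excluded literal $l^* := x \mapsto \alpha(x)$ contributes $\pguessed(\boldF^{[l^*]}, x, \alpha, s) = 0$ by the extended convention), giving $n(\boldF)\,\pguessed(\boldF, x, \alpha, s) - 1$ after the inner sum and hence $n(\boldF)\, c(\boldF, x) - 1$ after summing over $\alpha$ (using $\sum_\alpha p(\boldF, \alpha) = 1$). For the $\Delta_l^\alpha$-weighted part, Lemma \ref{lemma-pguessed-monotone} bounds $\pguessed(\boldF^{[l]}, x, \alpha, s) \leq \pguessed(\boldF, x, \alpha, s)$, while Observation \ref{observation-distribution} combined with the frozen case of Lemma \ref{lemma-probability-properties} ($p(\boldF^{[l^*]}, \alpha) = p(\boldF, \alpha)$, since $l^*$ is over a frozen variable) evaluates $\sum_{l \in \alpha,\, \mathrm{var}(l) \neq x} \Delta_l^\alpha = (|\SL(\boldF)| - n(\boldF))\,p(\boldF, \alpha) = \nN(\boldF)\,p(\boldF, \alpha)$, so this part contributes at most $\nN(\boldF)\, c(\boldF, x)$. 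Adding the two parts and using $n(\boldF) + \nN(\boldF) = |\SL(\boldF)|$ yields the per-variable bound, and summation over the $\nC(\boldF)$ frozen variables gives the second negative term.

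The main obstacle is precisely this frozen case: by itself, Lemma \ref{lemma-pguessed-monotone} only yields $\E_l[c(\boldF^{[l]}, x)] \leq c(\boldF, x)$ with no decrease at all, and squeezing out the required $1/|\SL(\boldF)|$ forces one to invoke Lemma \ref{lemma-pguessed-reduced} — whose hypothesis is precisely the $s$-implication-freeness of $\boldF$ — and to balance it against the $p$-increment on non-frozen literals via the identity $n(\boldF) + \nN(\boldF) = |\SL(\boldF)|$.
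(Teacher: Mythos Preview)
Your proposal is correct and follows essentially the same route as the paper: the same variable-by-variable split into Lemmas~\ref{lemma-noncrit-decrease} and~\ref{lemma-crit-decrease}, and in the frozen case the same bijection/swap of sums, the same decomposition $p(\boldF^{[l]},\alpha)=p(\boldF,\alpha)+\Delta_l^\alpha$, and the same application of Lemma~\ref{lemma-pguessed-reduced} to the first piece and Lemma~\ref{lemma-pguessed-monotone} together with Observation~\ref{observation-distribution} to the second. The only cosmetic difference is that the paper normalizes by $p(\boldF,\alpha)$ and argues the inequality for each fixed $\alpha$, whereas you keep the outer sum over $\alpha$ throughout; also, you explicitly drop the literal over $x$ and verify $\Delta_{l^*}^\alpha=0$, while the paper keeps that term and lets it vanish via the convention $\pguessed(\boldF^{[l^*]},x,\alpha,s)=0$.
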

If all variables are frozen, the cost decreases by $1$. If all variables are non-frozen, the cost decreases by $S<1$.
We will prove Theorem \ref{theorem-cost-decrease} later and use it now to prove Theorem \ref{theorem-cost-bound}:
\begin{proof}[Proof of Theorem \ref{theorem-cost-bound}]
We prove $\psuccess(\boldF,s)\geq 2^{-c(\boldF)}$ by induction on $n(\boldF)$.
If $n(\boldF)=0$, the statement is trivial. Assume the statement holds for formulas with less than $n(\boldF)$ variables, so that for $l\in\SL(\boldF)$, we have $\psuccess(\boldF^{[l]})\geq 2^{-c(\boldF^{[l]})}.$
If $\boldF$ is \emph{not} $s$-implication free, then let $l$ be the first $s$-implied literal fixed in $\PPSZ$ such that $\psuccess(\boldF)=\psuccess(\boldF^{[l]})$. The literal $l$ must be over a frozen variable, and from the last statement of Lemma \ref{lemma-probability-properties} we have $c(\boldF)\geq c(\boldF^{[l]})$ and hence $2^{-c(\boldF^{[l]})}\geq 2^{-c(\boldF)}$ and using the induction hypo\-the\-sis we are done.

Now assume that $\boldF$ is $s$-implication free.
Using Observation \ref{observation-induction} and the induction hy\-po\-the\-sis gives us
\[\psuccess(\boldF,s)=\frac{1}{2n(\boldF)}\sum_{l\in\SL(\boldF)}\psuccess(\boldF^{[l]},s)\geq\frac{1}{2n(\boldF)}\sum_{l\in\SL(\boldF)}2^{-c(\boldF^{[l]})}.\]
If we choose $l\in\SL(\boldF)$ u.a.r.\ we can write the sum as an expectation and then use Jensen's inequality and obtain
\[\psuccess(\boldF,s)\geq\frac{|\SL(\boldF)|}{2n(\boldF)}\E_{l}\left[2^{-c(\boldF^{[l]})}\right]\geq \frac{|\SL(\boldF)|}{2n(\boldF)}2^{-\E_{l}[c(\boldF^{[l]})]}=2^{\log(\frac{|\SL(\boldF)|}{2n(\boldF)})-\E_{l}[c(\boldF^{[l]})]}.\]
To prove the statement, we need to show that the exponent is at least $-c(\boldF)$, i.e.
\[L:=\log\left(\frac{|\SL(\boldF)|}{2n(\boldF)}\right)-\E_{l}[c(\boldF^{[l]})]+c(\boldF)\geq 0.\]
We bound the left-hand with Theorem \ref{theorem-cost-decrease} and obtain
\[L\geq \log\left(\frac{|\SL(\boldF)|}{2n(\boldF)}\right)-c(\boldF)+\nN(\boldF)\frac{2S}{|\SL(\boldF)|}+\nC(\boldF)\frac{1}{|\SL(\boldF)|}+c(\boldF)\]
\[=\log\left(\frac{|\SL(\boldF)|}{2n(\boldF)}\right)+\nN(\boldF)\frac{2S}{|\SL(\boldF)|}+\nC(\boldF)\frac{1}{|\SL(\boldF)|}\]
\[=\log\left(\frac{|\SL(\boldF)|}{n(\boldF)}\right)-1+\nN(\boldF)\frac{2S}{|\SL(\boldF)|}+\nC(\boldF)\frac{1}{|\SL(\boldF)|}.\]
%Using Observation \ref{observation-literal-count}, this is
Using twice $|\SL(\boldF)|=\nC(\boldF)+2\nN(\boldF)$, this is
\[=\log\left(1+\frac{\nN(\boldF)}{n(\boldF)}\right)+\nN(\boldF)\frac{2S}{|\SL(\boldF)|}-2\nN(\boldF)\frac{1}{|\SL(\boldF)|}.\]
With the inequality $\log(1+x)\geq \log(e)\frac{x}{1+x}$ (which is easily seen by writing $\log(1+x)$ as an integral), we have
\[L\geq\log(e) \frac{\frac{\nN(\boldF)}{n(\boldF)}}{\frac{|\SL(\boldF)|}{n(\boldF)}}+\nN(\boldF)\frac{2S}{|\SL(\boldF)|}-2\nN(\boldF)\frac{1}{|\SL(\boldF)|}\]
\[=\log(e) \frac{\nN(\boldF)}{|\SL(\boldF)|}-(2-2S)\frac{\nN(\boldF)}{|\SL(\boldF)|}.\]
It can be easily seen from the definition that $S_k$ increases for larger $k$. Hence $S\geq S_3=2\ln 2 -1\approx 0.3863$ and 
$(2-2S)\leq 4- 4\ln 2 < 1.23 < 1.44 < \log(e)$, 
which implies $L\geq 0$ and completes the proof.
\end{proof}
It is interesting to see that we still have some leeway in the last step. One checks that $\log(e)=(2-2\frac{S_3}{1+S_3})$ which means that our method works as long the upper bound $S$ that a frozen variable is guessed is at least $\frac{S_3}{1+S_3}\approx 0.2787$, corresponding to an algorithm with running time roughly $O(1.214^n)$.
\subsection{Remaining Proofs}
We now need to prove Theorem \ref{theorem-cost-decrease}. The theorem follows from the following two lemmas:
\begin{lemma}
\label{lemma-noncrit-decrease}
If $x\in \VN(\boldF)$, then for $l$ chosen u.a.r\ from $\SL(\boldF)$ we have
\[\E_l[c(\boldF^{[l]},x)]\leq c(\boldF,x)-\frac{2S}{|\SL(\boldF)|}.\]
\end{lemma}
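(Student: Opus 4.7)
The plan is to exploit the fact that when $x$ is non-frozen, both literals over $x$ are satisfying literals (so they both appear in $\SL(\boldF)$), and fixing one of them removes $x$ from the formula, contributing $0$ to $c(\boldF^{[l]},x)$. This pair of ``free'' zeroes is exactly what generates the $-\frac{2S}{|\SL(\boldF)|}$ savings, while the other literals can only be handled by the trivial upper bound $c(\boldF^{[l]},x) \leq S$.

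More precisely, I would proceed as follows. Since $x \in \VN(\boldF)$, by definition $c(\boldF,x) = S$ and both literals $l = x$ and $l = \ol{x}$ lie in $\SL(\boldF)$ (there are satisfying assignments setting $x$ each way). For each such literal, $x \notin V(\boldF^{[l]})$ and hence $c(\boldF^{[l]},x) = 0$ by the first case of the definition of the cost.

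Next I would establish the trivial upper bound $c(\boldF^{[l]},x) \leq S$ for every $l$. This is immediate: if $x \notin V(\boldF^{[l]})$ the cost is $0$; if $x \in \VN(\boldF^{[l]})$ it equals $S$; and if $x \in \VC(\boldF^{[l]})$, then Theorem \ref{theorem-guessed-bound} gives $\pguessed(\boldF^{[l]},x,\alpha,s) \leq S$ for every $\alpha$, so averaging against the probability distribution $p(\boldF^{[l]},\cdot)$ yields a value at most $S$.

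Combining these two observations and splitting the average over $\SL(\boldF)$ into the two literals over $x$ and the remaining $|\SL(\boldF)| - 2$ literals,
\[
\E_l[c(\boldF^{[l]},x)] = \frac{1}{|\SL(\boldF)|}\sum_{l \in \SL(\boldF)} c(\boldF^{[l]},x) \leq \frac{(|\SL(\boldF)|-2)\cdot S + 2 \cdot 0}{|\SL(\boldF)|} = S - \frac{2S}{|\SL(\boldF)|},
\]
which is exactly $c(\boldF,x) - \frac{2S}{|\SL(\boldF)|}$. There is no real obstacle here; the only thing to be careful about is verifying that both literals over $x$ are indeed in $\SL(\boldF)$, which is immediate from the definition of non-frozen, and that the cost $c(\boldF^{[l]},x)$ is bounded by $S$ uniformly in $l$, which uses Theorem \ref{theorem-guessed-bound} in the frozen sub-case.
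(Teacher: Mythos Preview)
Your proof is correct and follows essentially the same approach as the paper: use $c(\boldF,x)=S$ since $x$ is non-frozen, observe that both literals over $x$ lie in $\SL(\boldF)$ and contribute cost $0$ after restriction, and bound the remaining terms by the universal upper bound $c(\cdot,x)\leq S$. The paper's version is simply terser, phrasing the split as $S\cdot\Pr_l[x\in V(\boldF^{[l]})]$ rather than writing out the sum explicitly.
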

Note that this lemma holds even if $\boldF$ has $s$-implied literals. However, we only use it if $\boldF$ is $s$-implication free.
\begin{proof}
$x$ is non-frozen, so by definition $c(\boldF,x)=S$. As the cost of any variable is at most $S$, we have
\[E_l[c(\boldF^{[l]},x)]\leq S \Pr_l[x\in V(\boldF^{[l]})]=S \Pr_l[l\textrm{ is not over }x]=S \frac{|\SL(\boldF)|-2}{|\SL(\boldF)|}.\]
\end{proof}
\begin{lemma}
\label{lemma-crit-decrease}
If $x\in \VC(\boldF)$ and not $s$-implied, then for $l$ chosen u.a.r\ from $\SL(\boldF)$ we have
\[\E_l[c(\boldF^{[l]},x)]\leq c(\boldF,x)-\frac{1}{|\SL(\boldF)|}.\]
\end{lemma}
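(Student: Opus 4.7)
The plan is to expand both $c(\boldF,x)$ and $\E_l[c(\boldF^{[l]},x)]$ into the same double sum indexed by $\alpha\in\sat(\boldF)$ and $l\in\alpha$, and then combine three structural facts already at hand: Lemma~\ref{lemma-pguessed-monotone} (monotonicity of $\pguessed$ under restriction), Lemma~\ref{lemma-probability-properties} (monotonicity of $p$), and Lemma~\ref{lemma-pguessed-reduced} (the one-step recurrence for $\pguessed$). The one preliminary check I would make is that for each $l\in\SL(\boldF)$ not over $x$, the variable $x$ remains frozen in $\boldF^{[l]}$: the satisfying assignments of $\boldF^{[l]}$ are the restrictions of those $\alpha\in\sat(\boldF)$ that contain $l$, and since $x$ is frozen in $\boldF$ they all agree on $x$. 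When $l$ is over $x$ the term contributes $0$ because then $x\notin V(\boldF^{[l]})$.

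Reindexing $\alpha'\in\sat(\boldF^{[l]})$ as $\alpha=\alpha'\cup\{l\}\in\sat(\boldF)$ with $l\in\alpha$, I would write
\[|\SL(\boldF)|\cdot\E_l[c(\boldF^{[l]},x)] = \sum_{\alpha\in\sat(\boldF)}\sum_{l\in\alpha}p(\boldF^{[l]},\alpha)\,\pguessed(\boldF^{[l]},x,\alpha,s).\]
In parallel, the recurrence $|\SL(\boldF)|\,p(\boldF,\alpha)=\sum_{l\in\alpha}p(\boldF^{[l]},\alpha)$ from Observation~\ref{observation-distribution} yields
\[|\SL(\boldF)|\cdot c(\boldF,x) = \sum_{\alpha\in\sat(\boldF)}\pguessed(\boldF,x,\alpha,s)\sum_{l\in\alpha}p(\boldF^{[l]},\alpha).\]
Subtracting gives
\[|\SL(\boldF)|\bigl(c(\boldF,x)-\E_l[c(\boldF^{[l]},x)]\bigr)=\sum_{\alpha\in\sat(\boldF)}\sum_{l\in\alpha}p(\boldF^{[l]},\alpha)\Delta_l(\alpha),\]
where $\Delta_l(\alpha):=\pguessed(\boldF,x,\alpha,s)-\pguessed(\boldF^{[l]},x,\alpha,s)\geq 0$ by Lemma~\ref{lemma-pguessed-monotone}.

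To finish, I would replace $p(\boldF^{[l]},\alpha)$ with the smaller $p(\boldF,\alpha)$ via Lemma~\ref{lemma-probability-properties}, which is permissible because $\Delta_l(\alpha)\geq 0$; the inner sum then becomes $p(\boldF,\alpha)\sum_{l\in\alpha}\Delta_l(\alpha)$. Since $x$ is not $s$-implied, Lemma~\ref{lemma-pguessed-reduced} (together with $\pguessed(\boldF^{[l_x]},x,\alpha,s)=0$ for the literal $l_x\in\alpha$ over $x$) gives $\sum_{l\in\alpha}\Delta_l(\alpha)=n(\boldF)\pguessed(\boldF,x,\alpha,s)-(n(\boldF)\pguessed(\boldF,x,\alpha,s)-1)=1$. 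Summing over $\alpha$ uses $\sum_\alpha p(\boldF,\alpha)=1$, and dividing by $|\SL(\boldF)|$ yields the claim. The main obstacle I anticipate is precisely this last combination: the bound $p(\boldF^{[l]},\alpha)\geq p(\boldF,\alpha)$ looks to go the wrong direction for an upper bound on an expectation, but using it to downshift the $p$-weight to a quantity independent of $l$ is exactly what makes the $\Delta$-sum telescope, via Lemma~\ref{lemma-pguessed-reduced}, to the clean value $1$.
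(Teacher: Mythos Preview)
Your proof is correct and uses exactly the same ingredients as the paper's proof: the reindexing bijection between pairs $(l,\alpha')$ with $l\in\SL(\boldF)$, $\alpha'\in\sat(\boldF^{[l]})$ and pairs $(\alpha,l)$ with $\alpha\in\sat(\boldF)$, $l\in\alpha$, followed by Lemmas~\ref{lemma-pguessed-monotone}, \ref{lemma-probability-properties}, \ref{lemma-pguessed-reduced} and Observation~\ref{observation-distribution}. The only difference is in the algebraic packaging: the paper splits the sum $T=\sum_{l\in\alpha}\frac{p(\boldF^{[l]},\alpha)}{p(\boldF,\alpha)}\pguessed(\boldF^{[l]},x,\alpha,s)$ additively and bounds the two pieces separately, whereas you subtract first and group by $\Delta_l(\alpha)$, which makes the telescoping to $1$ via Lemma~\ref{lemma-pguessed-reduced} slightly more transparent---but the two computations are rearrangements of one another.
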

\begin{proof}
By writing the expectation as a sum and inserting the definition, we have
\[\E_l[c(\boldF^{[l]},x)]=\frac{1}{|\SL(\boldF)|}\sum_{l\in\SL(\boldF)}\sum_{\alpha'\in\sat(\boldF^{[l]})}p(\boldF^{[l]},\alpha')\pguessed(\boldF^{[l]},x,\alpha',s).\]
Note that we have extended $p$ and $\pguessed$ such that also assignments over a set $W\supset V(\boldF^{[l]})$ are allowed and the additional variables are ignored.
We now claim that we can exchange the sums and get
\[\E_l[c(\boldF^{[l]},x)]=\frac{1}{|\SL(\boldF)|}\sum_{\alpha\in\sat(\boldF)}\sum_{l\in\alpha}p(\boldF^{[l]},\alpha)\pguessed(\boldF^{[l]},x,\alpha,s).\]
To prove this, we need to show that there is a bijection between the set $\{(\alpha,l)\ |\ \alpha\in\sat(\boldF),l\in\alpha\}$ and the set $\{(l,\alpha')\ |\ l\in SL(\boldF),\alpha'\in \boldF^{[l]}\}$ s.t.\ $\alpha=\alpha'$ on $V(\boldF^{[l]})$. One can easily check that $f(\alpha,l):=(l,\alpha\setminus \{l\})$ with $f^{-1}(l,\alpha')=(\alpha'\cup \{l\},l)$ is such a bijection.

Now the outer sum is over $\alpha\in\sat(\boldF)$, as in the definition of $c(\boldF,x)$. Hence it is sufficient to prove that for all $\alpha\in\sat(\boldF)$ we have
\[\frac{1}{|\SL(\boldF)|}\sum_{l\in\alpha}p(\boldF^{[l]},\alpha)\pguessed(\boldF^{[l]},x,\alpha,s)\leq p(\boldF,\alpha)\pguessed(\boldF,x,\alpha,s)-p(\boldF,\alpha)\frac{1}{|\SL(\boldF)|}.\]
We multiply this by $\SL(\boldF)$ and divide it by $p(\boldF,\alpha)$ (which is trivially positive) and get equivalently
\begin{equation}
\label{eq-texpression}
T:=\sum_{l\in\alpha}\frac{p(\boldF^{[l]},\alpha)}{p(\boldF,\alpha)}\pguessed(\boldF^{[l]},x,\alpha,s)\leq |\SL(\boldF)|\pguessed(\boldF,x,\alpha,s)-1.
\end{equation}
It remains to show (\ref{eq-texpression}). We bound the left-hand side $T$. First we split it into two sums:
\[T=\sum_{l\in\alpha}\pguessed(\boldF^{[l]},x,\alpha,s)+\sum_{l\in\alpha}\frac{p(\boldF^{[l]},\alpha)-p(\boldF,\alpha)}{p(\boldF,\alpha)}\pguessed(\boldF^{[l]},x,\alpha,s).\]
Now we use Lemma \ref{lemma-pguessed-reduced} multiplied by $n(\boldF)$ on the first sum. By Lemma \ref{lemma-probability-properties} all summands of the second sum are positive, so we can use Lemma \ref{lemma-pguessed-monotone} to bound $\pguessed(\boldF^{[l]},x,\alpha,s)$ from above by $\pguessed(\boldF,x,\alpha,s)$. We obtain
\[T\leq n(\boldF)\pguessed(\boldF,x,\alpha,s)-1+\sum_{l\in\alpha}\frac{p(\boldF^{[l]},\alpha)-p(\boldF,\alpha)}{p(\boldF,\alpha)}\pguessed(\boldF,x,\alpha,s).\]
Observation \ref{observation-distribution} tells us that $\sum_{l\in\alpha}p(\boldF^{[l]},\alpha)=|\SL(\boldF)|p(\boldF,\alpha)$, and so
\[\sum_{l\in\alpha}\frac{p(\boldF^{[l]},\alpha)-p(\boldF,\alpha)}{p(\boldF,\alpha)}=|\SL(\boldF)|-n(\boldF).\]
Therefore
\[T\leq n(\boldF)\pguessed(\boldF,x,\alpha,s)-1+(|\SL(\boldF)|-n(\boldF))\pguessed(\boldF,x,\alpha,s),\]
which is equal to $|\SL(\boldF)|\pguessed(\boldF,x,\alpha,s)-1$ and hence (\ref{eq-texpression}) holds.
\end{proof}

Theorem \ref{theorem-cost-decrease} can now be easily proved:
\begin{proof}[Proof of Theorem \ref{theorem-cost-decrease}]
We need to show that $\boldF$ is $s$-implication free and $l$ chosen u.a.r\ from $\SL(\boldF)$, we have
\[\E_l[c(\boldF^{[l]})]\leq c(\boldF)-\nN(\boldF)\frac{2S}{|\SL(\boldF)|}-\nC(\boldF)\frac{1}{|\SL(\boldF)|}.\]
Using the definition of the cost we obtain
\[\E_l[c(\boldF^{[l]})]=\E_l\left[\sum_{x\in V(\boldF^{[l]})}c(\boldF^{[l]},x)\right]=\E_l\left[\sum_{x\in V(\boldF)}c(\boldF^{[l]},x)\right].\]
Then linearity of expectation gives
\[\E_l[c(\boldF^{[l]})]=\sum_{x\in V(\boldF)}\E_l[c(\boldF^{[l]},x)].\]
Now we can plug in Lemmas \ref{lemma-noncrit-decrease} and \ref{lemma-crit-decrease} to get
\[\E_l[c(\boldF^{[l]})]\leq \sum_{x\in V(\boldF)}c(\boldF,x)-\nN(\boldF)\frac{2S}{|\SL(\boldF)|}-\nC(\boldF)\frac{1}{|\SL(\boldF)|}.\]
Using the definition of $c(\boldF)$ gives the statement.
\end{proof}
\section{Conclusion}
We have shown an analysis of a slightly adapted PPSZ algorithm that gives the same bound for general $k$-SAT as for Unique $k$-SAT. For $k\geq 5$, this was already known, but our analysis might be considered more intuitive. For $k=3$ and for $k=4$ this gives improved running time bounds; for $k=3$ the bound significantly improves from $O(1.32065^n)$ to $O(1.30704^n)$. The fastest known randomized algorithm for $3$-SAT is now again rather simple compared with the algorithm proposed in~\cite{hms10}. It is noteworthy that this is the first algorithm for $3$-SAT that is faster than, but independent of Sch\"oning's algorithm~\cite{schoning1999}. The best known bounds for Unique $k$-SAT and $k$-SAT match now, but it is still an open question if this holds in general, as conjectured by Calabro et al.~\cite{cikp08}. 

For \emph{deterministic} algorithms, the picture is a bit different. Recently Sch\"oning's algorithm has been fully derandomized by Moser and Scheder~\cite{schderandom} yielding a deterministic algorithm for $3$-SAT running in time $O(1.33334^n)$. This has been improved very recently by Makino, Tamaki, and Yamamoto~\cite{mty11} to $O(1.3303^n)$. For Unique $k$-SAT, PPSZ has been fully derandomized by Rolf in 2005~\cite{rolfderandom}, giving a deterministic algorithm for Unique $3$-SAT running in time $O(1.30704^n)$, as the randomized version. Our new approach to generalize from Unique $k$-SAT to $k$-SAT in PPSZ might be used to derandomize PPSZ for general $k$-SAT. 

We have adapted PPSZ slightly by immediately using $s$-implied literals. In the original PPSZ, $s$-implied variables are good because they behave like non-frozen variables in the sense that restricting to them preserves satisfiability. However, while non-frozen variables still have an expected cost reduction of $\frac{2S}{|\SL(\boldF)|}\approx \frac{0.773}{|\SL(\boldF)|}$, the cost reduction of $s$-implied variables is $0$, as they are guessed with probability $0$ and hence already have cost $0$. Our approach needs cost reduction at least $\frac{2-\log(e)}{\SL(\boldF)|}\approx \frac{0.557}{|\SL(\boldF)|}$. It might be interesting to check if our approach can be improved to overcome this problem and accommodate the original PPSZ algorithm.
\section*{Acknowledgements}
I am very grateful to Heidi Gebauer, Dominik Scheder, and Emo Welzl for checking my ideas. Special thanks go to Robin Moser for continuous assistance in realizing this paper.
\bibstyle{plain}

\bibliography{../common}

\end{document}